\documentclass[10pt]{article}


\usepackage[utf8]{inputenc}

\usepackage[english]{babel}
\usepackage{amsmath}
\usepackage{amssymb}
\usepackage{dsfont} 
\usepackage{amsthm}
\usepackage[T1]{fontenc}
\usepackage[utf8]{inputenc} 
\usepackage{tikz}
\usepackage{verbatim}
\usetikzlibrary{arrows}
\usepackage{float}
\usepackage{subcaption}
\usepackage{caption}
\usepackage{array}
\usepackage{bm}
\usepackage[labelfont=bf]{caption}
\usepackage{enumerate}
\usepackage[final]{pdfpages}
\usepackage{graphicx}
\usepackage{epigraph}
\usepackage{color}
\usepackage{framed}
\usepackage{lipsum} 
\usepackage{geometry} 
\usepackage{marginnote} 



\setlength\epigraphwidth{13cm}
\setlength\epigraphrule{0 pt}

\definecolor{myRed}{rgb}{1,0,0}
\definecolor{myBlue}{rgb}{0,0,1}
\definecolor{myYellow}{rgb}{0,1,0}
\tikzstyle{red}=[thick, myRed, opacity=1]
\tikzstyle{blue}=[thick, myBlue, opacity=1]
\tikzstyle{yellow}=[ thick, myYellow, opacity=1]
\tikzset{edge/.style = {->,> = latex'}}


\newcommand{\path}{\vec{p}}
\newcommand{\supp}{\mathrm{Supp}}

\newcolumntype{L}{>{\arraybackslash}m{3cm}}
\newcommand{\eps}{\varepsilon}


\newcommand{\normal}{\mathcal{N}}

\renewcommand{\P}{\mathbb{P}}
\newcommand{\E}{\mathbb{E}}
\newcommand{\V}{\mathbb{V}}

\newcommand{\N}{\mathbb{N}}
\newcommand{\R}{\mathbb{R}}

\newcommand{\ind}{\mathds{1}}

\renewcommand{\ij}{(i,j)}
\newcommand{\rP}{\vec{\mathcal{P}}}



\def\var#1{\mathbb{V}\left[#1\right]}

\newtheorem{thm}{Theorem}
\newtheorem{defn}{Definition}
\newtheorem{lemma}[thm]{Lemma}

\newtheorem{ex}[thm]{Example}

\newtheorem{corollary}[thm]{Corollary}

\newtheorem{ass}[thm]{Assumption}
\newtheorem{remark}{Remark}

\definecolor{chocolate}{rgb}{0.48, 0.25, 0.0}
\definecolor{berry}{rgb}{0.78, 0.31, 0.39}

\usepackage[normalem]{ulem}
\usepackage{hyperref}

\hyphenation{sa-tis-fy se-ries pro-ba-bi-li-ty}


\hoffset -1.0cm
\voffset -2.0cm
\textwidth 17cm
\textheight 23.0cm


\title{Estimations of means and variances in a Markov  linear model. } 


\author
{Abraham Gutierrez, Sebastian M\"uller
\\
\today
}


\date{}


\begin{document}


\maketitle

%
\begin{abstract}
Multivariate regression models and ANOVA are probably the most frequently applied methods of all statistical analyses. We study the case where the predictors are qualitative variables, and the response variable is quantitative. In this case, we propose an alternative to the classic approaches that does not assume homoscedasticity but assumes that a Markov chain can describe the covariates' correlations. 

This approach transforms the dependent covariates using a change of measure to independent covariates. The transformed estimates allow a pairwise comparison of the mean and variance of the contribution of different values of the covariates. We show that under standard moment conditions, the estimators are asymptotically normally distributed. 

We test our  method with data from simulations and apply it to several classic data sets.

\noindent Keywords: linear model, multicollinearity,  regression, Markov chain, directed acyclic graph, multi-dimensional Anscombe's Theorem

\noindent AMS Classification: 62H12, 62J99, 62M05, 60F05, 60J20

\end{abstract}


\section{Introduction}

We propose a linear model for qualitative predictors, $X^{(1)},\ldots, X^{(m)}$ and a quantitative response variable $Y$. Our approach allows us to model different variances for each category and to consider arbitrary  error terms. It weakens the assumptions of independent predictors of a previous work \cite{GuMu:19}, in the following way: the value of $X^{(k+1)}$ conditioned on $X^{(k)}$ is independent of $X^{(k-1)}$, for all reasonable choices of $k$.  

Our approach is in a ``probabilistic spirit'' since we use Markov chains and a change of measure similar to a  Girsanov transform to model the correlation between the predictors and use classic results on random walks to study the asymptotic behavior of the resulting estimators.

The approach is probably best understood with a comparison to standard linear models.
Let us assume that there are two categorical predictors or groups  $X^{(1)}$ and $X^{(2)}$  and a quantitative response $Y$. We assume that each predictor $X^{(i)}$ can take values in $\{c_{1},\ldots, c_{k_{i}}\}, i\in \{1,2\}$, and encode the values of these variables using the vectors\footnote{In contrast to the standard encoding of categorical variables in linear models with $k_{i}-1$ dummy variables, we choose $k_{i}$ dummy variables.}
\begin{equation}
\widehat X^{(i)}=( \ind\{X^{(i)}=c_{1}\}, \ldots, \ind\{X^{(i)}=c_{k_{i}}\}).
\end{equation}
Let $\alpha^{(i)} \in \R^{k_{i}}$ be real vectors and $\eps^{(i)}$ be random vectors taken values in $\R^{k_{i}}$; we assume that $\E[ \eps^{(i)}]=0$, but allow the variances of each component to be different and denote $\sigma_{j}^{(i)}=\V[ \eps^{(i)}_{j}]$.
We assume the following linear relationship between the two categorical predictors and the response variable:
\begin{equation}\label{eq:linearRelationshipDAG}
y=  (\alpha^{(1)}+ \eps^{(1)}) \widehat x_{1}^{t} +   (\alpha^{(2)}+ \eps^{(2)}) \widehat x_{2}^{t}.
\end{equation}
In a standard linear regression model or ANOVA  the linear relationship between predictors and the response variable looks like
\begin{equation}\label{eq:linearRelationshipRegression}
y=  \alpha^{(1)} \widehat x_{1}^{t} +   \alpha^{(2)} \widehat x_{2}^{t} + \eps.
\end{equation}
Note that here $\eps$ is a random error that it supposed to be centered. In particular, the classic models assume homoscedasticity, i.e., that the contribution to the random error is the same for each category. When this assumption is violated, it is typically referred to as heteroscedasticity. As heteroscedasticity is not uncommon in practice there has been an extensive research in various fields of applications on how to generalize linear models in this respect. We refer to \cite{rosopa} for an overview on impacts of heteroscedasticity on the quality of estimations and on mitigating it effects. 

 We propose a modeling of dependencies between different predictors using Markov chains and a pathwise approach. Since the predictors are categorical variables, every individual of the sample corresponds to a path. For instance, in Figure \ref{fig:path} the bold path corresponds to an individual with $X^{(1)}=X^{(2)}=1$. A Markov chain now gives the dependencies between the different categories. The probability of choosing category $i$ in predictor one is $p_{s, 1i}$; corresponding to the initial measure of a Markov chain. Now, the probability of choosing $j$ in predictor two conditioned on having chosen $i$ in predictor one is given by $p_{1i,2j}$. Again, each path corresponds to the categories of an individual. If the transition probability is those of a  uniform distribution, the categories of the individuals are independent. 

In the case of non-uniform transition, estimations for the contribution of specific factors can be biased. For instance, in the situation of Figure \ref{fig:path} assume that factor $1$ has no contribution, but the category in factor $2$ has a much higher contribution than the other categories of this factor. Now, if the probability that category $1$ of factor $1$ is chosen together with category  $1$ of factor $2$ with much higher probability than the other categories in factor $1$, the contribution of category $1$ of factor $1$ is generally overestimated if one works under the assumption of independence of the factors.  This situation corresponds to multicollinearity  in linear models.  The correlation between predictor variables makes estimation and interpretation of the models more complicated than in the independent case, since  it is difficult to disentangle the impacts of different predictor variables on the response.

We solve this problem in transferring the non-uniform setting to a uniform setting using a discrete Girsanov transform, see Lemma \ref{lemma:MeasureChange}. This transforms quantifies the biases introduced by the variables' dependence. Here, the martingale measure in the standard Girsanov transform corresponds to the setting of uniform transition or independence of the predictors. In other words, the Girsanov transform allows us to quantify the dependence of the predictors and change the measure to a uniform or independent setting.
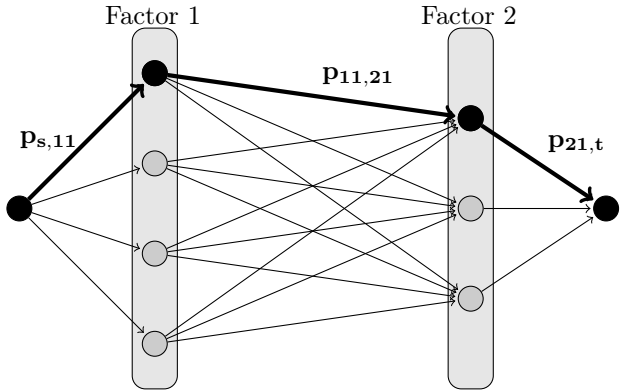
\begin{figure}
\centering
\tikzset{
node_style/.style={circle,draw=black,fill=black!20!, minimum size=0pt},
node_stylethick/.style={circle,draw=black,fill=black, minimum size=0pt},
edge_style/.style={draw=black,  thin, ->},
edge_stylethick/.style={draw=black,  ultra thick, ->},
label/.style={sloped,above}
}
\begin{tikzpicture}[shorten >=1pt,
   auto,
   node distance=2cm and 10cm,
   scale=0.6]

   \draw [rounded corners=5, fill=black!10!] (2.5,-7) rectangle ++(1,8) node [midway] {};
   \node[text width=1.5cm] at (3.15,1.3) {Factor 1};
    \node[text width=1.5cm] at (10.15,1.3) {Factor 2};
        
   \draw [rounded corners=5, fill=black!10!] (9.5,-7) rectangle ++(1,8) node [midway] {};

     \node[node_stylethick] (s) at (0,-3) {};
     
     \foreach \y/\t in {1/1, 2/1, 3/1, 4/1}{
     \node[node_style]  (v\y) at (3, -2*\y+2) {};
     \draw [edge_style]  (s) edge  (v\y);}
       
       \draw [edge_stylethick]  (s) edge  (v1);
       \path (s) to node [yshift=1em] {$\mathbf{p_{s,11}}$} (v2);
	
 \node[node_stylethick] (v1) at (3,0) {};
       
     \foreach \y/\t in {1/2, 2/2, 3/2}
     \node[node_style]  (w\y) at (10, -2*\y+1) {};
     \foreach  \y in {1,2,3,4} 
        \foreach  \w in {1,2,3} 
           \draw [edge_style]  (v\y) edge  (w\w);
           
           \draw [edge_stylethick]  (v1) edge  (w1);
           \path (v1) to node [yshift=0em] {$\mathbf{p_{11,21}}$} (w1);
            \node[node_stylethick] (w1) at (10,-1) {};
           
     \foreach \y/\t in {1/3}
     \node[node_stylethick]  (x\y) at (13, -3) {};
     \foreach  \y in {1,2,3} 
        \foreach  \w in {1} 
           \draw [edge_style]  (w\y) edge  (x\w);
         
          \draw [edge_stylethick]  (w1) edge  (x1);
           \path (w1) to node  {$\mathbf{p_{21,t}}$} (x1);

    \end{tikzpicture}
\caption{An illustration of a path passing through different workstations in a production network.}\label{fig:path}
\end{figure}

\subsection{Motivation}
The proposed model applies in a very general setting. However, it is motivated by real-world applications in the quality control of parallel production networks, see also the previous work \cite{GuMu:19}, and we think it might be useful to have this particular use case in mind.

In parallel production networks, the quality of the end product depends on various intermediate production steps.  Therefore, a recent challenge is to achieve a level of visibility into the production flows that allow us to optimize throughput by guaranteeing at the same time given quality standards. 

We consider a  production network consisting of several workstations; see Figure \ref{fig:path}. Each workstation is a parallel configuration of machines performing the same kind of tasks on a given part. Parts move from one workstation to another, and at each workstation, a part is assigned randomly to a machine. We assume that the production network is \emph{acyclic}, i.e., a part does not return to a workstation where it previously received service. Furthermore, we assume that the quality of the end product is \emph{additive}, that is, the final quality is the sum of the machines' quality contributions along the production path. Separate latent random variables model the contribution of each machine. Note that the product's quality can be replaced by the time it takes to perform specific tasks. In this language, our main result is constructing estimators that allow pairwise and multiple comparisons of the means and variances of different machines in the same workstation. These comparisons then may lead to the identification of unreliable machines.  

In Section \ref{sec:examples}, we treat examples of different kinds to illustrate that the method applies to various kinds of use cases.




\subsection{Related work}
Our situation of the covariates falls into the framework of graph-structured covariates, e.g., see \cite{LiLi:10}. In contrast to \cite{LiLi:10} the ``dependency structure'' in our case is acyclic which allow our more direct solution. 
Multifactor experimental designs,  e.g., see \cite{DrPu:96, Se:18},  are other alternatives in the literature to estimate the mean differences; however mostly rely on homoscedasticity.  They are usually used in the context of statistically planned experiments, which consists of a few experimental runs to obtain data on the product characteristics. If the number of observations for each setting  is sufficiently high and under further conditions described in \cite[Section 4]{DrPu:96} these methods allow a comparison of the variances, too. We also want to mention the connection to critical paths analysis, e.g., see ~\cite{BoGeWeAr:10, Schu:05}.  These methods allow to find critical paths in acyclic networks, however, they seem  suited to compare nor estimate differences in mean and variances.

\subsection{Outline}
In Section \ref{sec:model}, we define the model using Markov chains and directed acyclic graphs. In Section \ref{section:UniformTransition} we summarize the results in the case of independent covariates from \cite{GuMu:19}. Section \ref{sec:Markov} contains the main results. Lemma \ref{lemma:MeasureChange} describes the Girsanov transform. In Theorems \ref{thm:markovlln}, we prove asymptotic consistency of the estimators for mean and variance. In Section \ref{sec:asympt}, their asymptotic distribution is given for the case where the correlations are known in Theorem \ref{thm:Qknown}, and where the correlations are unknown in Theorem  \ref{thm:asymptUnknownQ}.
Finally, we present several examples in Section \ref{sec:examples}.

\section{The model}\label{sec:model}
We use a  directed acyclic graph (DAG) to describe the dependencies of the covariates. 
 A DAG  is a finite directed graph with no directed cycles. It consists of a finite vertex set $V$ and a finite set of directed edges $E=\{ (v,w): v,w \in V, v\neq w\}$. In our setting the DAG contains two special vertices: a source $s$ and a sink $t$.  We are interested in the paths from the source to the sink in this graph.  We denote a path $\vec{p}$ in the DAG as $\vec{p} = (p_{0},p_{1},\ldots, p_{c}, p_{c+1})$ where $p_{0}=s$ and $p_{c+1}=t$ and $(p_{i},p_{i+1})\in E$. We define $\vec{p}[j]:=p_{j}$.  We refer to Figure \ref{fig:DAG1} for an illustration and to \cite{BaGu:09} for more details on directed graphs.

We assume that at each step $1\leq i \leq c$ the path $\vec{p}$ has  $r_{i}\leq r$  different choices and the nodes in each column are always numerated starting with $1$.  The possible choices of a path can therefore be modeled through an $r\times c$ matrix.   More precisely, given a path $\vec{p}$, we associate an $r \times c$ binary matrix $V_{\vec{p}}$ that has $1$'s 
only in the nodes visited by the path:
$$
V_{\vec{p}} := (V_{\vec{p}}(i,j))_{i\in[r],\, j\in [c]},
$$
where we denote $[k]:=\{1,2,\ldots, k\}$ for an integer $k$.
We call $V_{\vec{p}}$ the {indicator matrix} of the path $\vec{p}$.

Each path contains exactly one node of each column. This paper aims to study differences among nodes of the same columns.
We think of nodes in the same columns as different possibilities for a given task,  as different persons performing the same job,  as different machines in the same workstation, or as variations of the same kind of treatment.  

\begin{figure}
\centering
\tikzset{
node_style/.style={circle,draw=black,fill=black!20!, minimum size=1cm},
edge_style/.style={draw=black,  thick, ->},
label/.style={sloped,above}
}
\begin{tikzpicture}[shorten >=1pt,
   auto,
   node distance=2cm and 10cm,
   scale=0.9]
    \node[node_style] (s) at (0,-3) {$s$};
     \foreach \y/\t in {1/1, 2/1, 3/1, 4/1}{
     \node[node_style]  (v\y) at (3, -2*\y+2) {(\y,\t)};
     \draw [edge_style]  (s) edge  (v\y);}
       
     \foreach \y/\t in {1/2, 2/2, 3/2}
     \node[node_style]  (w\y) at (6, -2*\y+1) {(\y,\t)};
     \foreach  \y in {1,2,3,4} 
        \foreach  \w in {1,2,3} 
           \draw [edge_style]  (v\y) edge  (w\w);
           
     \foreach \y/\t in {1/3, 2/3}
     \node[node_style]  (x\y) at (9, -2*\y) {(\y,\t)};
     \foreach  \y in {1,2,3} 
        \foreach  \w in {1,2} 
           \draw [edge_style]  (w\y) edge  (x\w);
           
      \foreach \y/\t in {1/4, 2/4, 3/4, 4/4}
     \node[node_style]  (y\y) at (12, -2*\y+2) {(\y,\t)};
     \foreach  \y in {1,2} 
        \foreach  \w in {1,2,3,4} 
           \draw [edge_style]  (x\y) edge  (y\w);
      
      \node[node_style] (t) at (15,-3) {$t$};     
           \foreach  \w in {1,2,3,4} 
            \draw [edge_style]   (y\w) edge (t);
    \end{tikzpicture}
\caption{An illustration of a DAG with $c=4$ and  $r_{1}=4, r_{2}=3, r_{3}=2$ and $r_{4}=4$. Every node in column $i$ has outgoing edges to every node in column $i+1,~ i=1,\ldots, c-1$.}\label{fig:DAG1}
\end{figure}
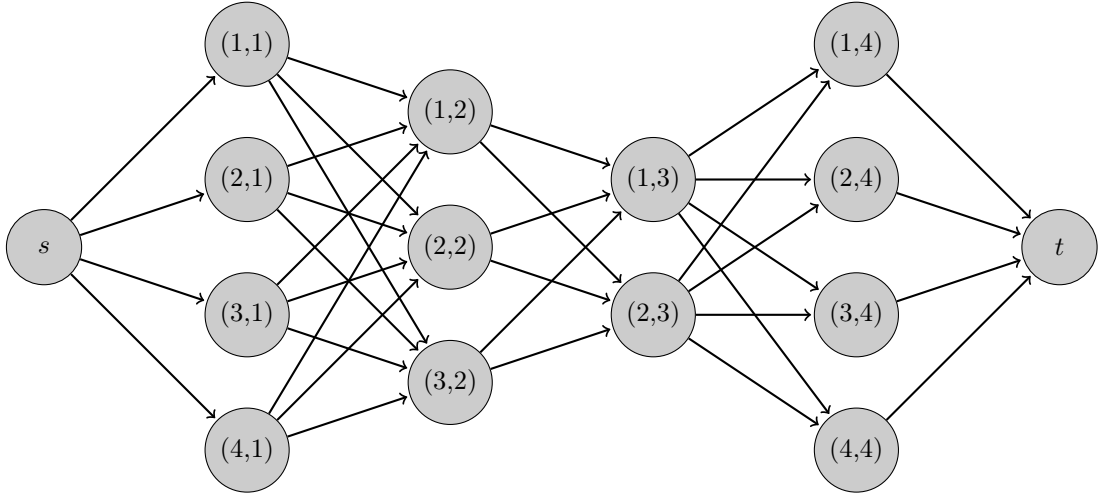

The given data consists of the list of 
paths $\{ \vec{p}_{i} \}_{i = 1,\ldots, n}$ in the DAG and the list of outputs $\{ b(\vec{p}_{i}) \}_{i = 1,\ldots, n}$ for each path.
We consider the {quality matrix} $S$, which is a random matrix of size $r\times c$ with real entries:
$$
S := ( s(i,j) )_{i\in[r],\, j\in [c]},
\qquad
s(i,j) \in \mathbb{R}.
$$
We model the paths with a random vector $\rP := (P_{1}, \ldots, P_{c})$ 
where the components $P_{i}$ are random variables over the set $[r]$. 

Throughout the paper we work under the following standing assumptions:
\begin{ass}\label{ass1} We assume that:
\begin{enumerate}
 \item   all entries of $S$ have finite second moments; 
 \item  the random variables $S\ij, i\in[r],\, j\in [c]$ are (jointly) independent;
 \item the paths $\rP_{1}, \rP_{2}, \ldots$ are chosen independently and according to a Markov chain, see Section \ref{sec:Markov} for more details.\end{enumerate}
\end{ass}
\noindent Note that,  we do not assume the entries of $S$ to be identically distributed or having the same variance. 

Let $\vec{p} = (p_{0},p_{1},\ldots, p_{c}, p_{c+1})$ be a realization of $\rP$, where $p_{0}=s$ and $p_{c+1}=t$ almost surely.
Then, the quality of the construction path $\vec{p}$ is defined as
$$
b(\vec{p}, S):= \sum_{j= 1}^{c} S(p_{j}, j).
$$
In some situation we will abbreviate $b(\vec{p},S)$ and write only $b(\vec{p})$.  
We also can think of $b(\vec{p})$ as the quality (or error) cumulated along the path $\vec{p}$ or as the response variable $Y$ in the language of linear models.
We want to stress out that the model fits the description in (\ref{eq:linearRelationshipDAG}) and that we will continue to use the ``graph-based'' description of the model since it is best suited for our approach and used methods. 

Let us make precise how the randomness enters in our model. We choose a random path $\rP$ and a random matrix $S$. The corresponding probability measure is denoted by $\P$.
 The random choice of $\rP$ and $S$ induces a random variable $b(\rP)=b(\rP,S)$ and allows to generate a sequence of  i.i.d.~random variables $(\rP_{1}, b(\rP_{1})), (\rP_{2}, b(\rP_{2})),\ldots$.

Our goal is to give estimates on the law of $S$ by observing the paths $\rP$ and its cumulated qualities $b(\rP)$. Note that,  $(\rP_{n}, b(\rP_{n}))_{n\in\N}$ is  in general not a sufficient statistic for $S$, i.e., we can not recover the distribution of $S$ by only observing realizations of $(\rP, b(\rP))$, as we see in the following remark.

\begin{remark}\label{rem:not_sufficient}
Let us consider the case $r=1$ and $c=2$. Let $S(1,1) \sim\normal(0,1)$ and  
 $S(1,2) \sim\normal(1,1)$ and define  $\tilde S(1,1):= S(1,1)+1$ and  
 $\tilde S(1,2):= S(1,2)-1$. Then for any given path $\path$ we have that
$\sum_{j= 1}^{2} S(p_{j}, j)=  \sum_{j= 1}^{2} \tilde S(p_{j}, j)$. Hence, the statistic $(\path_{n}, b(\path_{n}))_{n\in \N}$ does not allow us to distinguish between $S$ and $\tilde S$. 
\end{remark}

\begin{ex}[Binary errors]
The matrix $S$ consists of  independent Bernoulli random variables $S(i,j)$. The value $1$ of this Bernoulli may encode a defect and hence $b(\path)$ counts the number of defects of the end product.
\end{ex}

\begin{ex}[Gaussian error]\label{ex:gaussian}
The matrix $S$ consists of  independent Gaussian  random variables $S(i,j)$. The quality or response $b(\path)$ is then distributed as a (random) mixture of Gaussian random variables.
 \end{ex}

Given a sequence of realizations $(\vec{p}_{k})_{k\in[n]}$ of $\rP$, we define the following matrices that are at the core of our analysis:
\begin{equation*}
B^{(n)} := \sum_{k=1}^{n} b(\vec{p}_{k}) V_{\vec{p}_{k}},~
V^{(n)} := \sum_{k=1}^{n} V_{\vec{p}_{k}}, n\geq 1.
\end{equation*}
The value $B^{(n)}(i,j)$ is the sum of all cumulated qualities of paths containing node $\ij$, whereas $V^{(n)}(i,j)$ just counts the number of times node $\ij$ was used.
We define the {sample mean matrix} as the sample mean quality matrix  :
$$
T^{(n)} := (T^{(n)}(i,j))_{i\in[r],\, j\in [c]}, 
\mbox{ where }
T^{(n)}(i,j)
:=
\begin{cases}
\frac{B^{(n)}(i,j)}{V^{(n)}(i,j)},&\quad\text{if } V^{(n)}(i,j) \neq 0 ;\\
\text{0, } &\quad\text{otherwise}.\\
\end{cases}
$$

The corresponding { sample variance matrix} $\Sigma^{(n)}$ is defined by
\begin{equation*}
\Sigma^{(n)}(i,j):= \begin{cases}
\frac{1}{V^{(n)}(i,j)} \sum_{k=1}^{n} \left(b(\vec{p}_{k}) V_{\vec{p}_{k}}(i,j) - T^{(n)}(i,j)\right)^{2},&\quad\text{if } V^{(n)}(i,j) \neq 0 ;\\
\text{0, } &\quad\text{otherwise}.\\
\end{cases}
\end{equation*}

\section{Uniform transition model.}\label{section:UniformTransition}
In this section we review some results from \cite{GuMu:19} in the special case where  the paths $\rP_{1}, \rP_{2}, \ldots$ are chosen independently and uniformly.

\begin{thm}\label{thm:lln}
Let $(i,j), (i', j) \in [r] \times [c]$, then 
$$
T^{(n)}(i,j) - T^{(n)}(i',j)\xrightarrow[n\rightarrow \infty]{a.s.} \E[S(i,j)] - \E[S(i',j)]
$$
and 
$$
\Sigma^{(n)}(i,j) - \Sigma^{(n)}(i',j)\xrightarrow[n\rightarrow \infty]{a.s.} \V[S(i,j)] - \V[S(i',j)].
$$
\end{thm}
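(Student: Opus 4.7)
The key observation is that, under the uniform transition model, the random columns $P_1,\ldots,P_c$ are jointly independent, with $P_\ell$ uniform on $[r_\ell]$. Consequently, conditional on the event $\{P_j=i\}$, the contribution of the other columns to $b(\vec P)=\sum_{\ell=1}^c S(P_\ell,\ell)$ is distributed independently of $i$. I would therefore start by isolating two quantities that do not depend on $i$:
\begin{align*}
M_j &:= \sum_{\ell\neq j}\frac{1}{r_\ell}\sum_{k=1}^{r_\ell}\E[S(k,\ell)],\\
W_j &:= \sum_{\ell\neq j}\V\!\left[S(P_\ell,\ell)\right],
\end{align*}
and verify via the independence of columns and of the entries of $S$ that
\[
\E[b(\vec P)\mid P_j=i]=\E[S(i,j)]+M_j,\qquad \V[b(\vec P)\mid P_j=i]=\V[S(i,j)]+W_j.
\]
Once this is established, the right-hand sides of both claims in the theorem drop out immediately upon subtraction, so the work reduces to showing $T^{(n)}(i,j)\to \E[S(i,j)]+M_j$ and $\Sigma^{(n)}(i,j)\to \V[S(i,j)]+W_j$ almost surely.

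Next I would run the standard ``subsequence SLLN'' argument. Since $(\vec P_k,b(\vec P_k))$ are i.i.d.\ and $\P(P_j=i)=1/r_j>0$, the law of large numbers applied to $V^{(n)}(i,j)=\sum_{k=1}^n\ind\{\vec P_k[j]=i\}$ gives $V^{(n)}(i,j)/n\to 1/r_j$, hence $V^{(n)}(i,j)\to\infty$ a.s. On the event $\{\vec P_k[j]=i\}$, the pairs $(\vec P_k,b(\vec P_k))$ form, along the random subsequence, an i.i.d.\ sequence with law $(\vec P,b(\vec P))\mid P_j=i$ (this is a consequence of the strong Markov / optional sampling principle for i.i.d.\ sampling from a fixed distribution). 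Kolmogorov's SLLN, together with $V^{(n)}(i,j)\to\infty$, then yields
\[
T^{(n)}(i,j)=\frac{1}{V^{(n)}(i,j)}\sum_{k=1}^n b(\vec P_k)V_{\vec P_k}(i,j)\;\xrightarrow[n\to\infty]{a.s.}\;\E[b(\vec P)\mid P_j=i],
\]
and an analogous argument applied to the centered squares, combined with the continuous mapping of $T^{(n)}(i,j)$ onto its limit, gives the same statement for $\Sigma^{(n)}(i,j)$. The finite-second-moment assumption in Assumption~\ref{ass1} is exactly what is needed for the SLLN on the squared terms.

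Finally, subtracting the limits obtained for $(i,j)$ and $(i',j)$ cancels the column-wise terms $M_j$ and $W_j$ and yields the two desired identities. The only genuinely delicate point, and thus the main technical obstacle, is the passage from the fixed-$n$ normalization $1/n$ to the random normalization $1/V^{(n)}(i,j)$ in the definition of $T^{(n)}$ and $\Sigma^{(n)}$; this is however standard and can be handled either by appealing to the i.i.d.\ subsequence picture described above or by writing $T^{(n)}(i,j)=\frac{n^{-1}\sum_k b(\vec P_k)V_{\vec P_k}(i,j)}{n^{-1}V^{(n)}(i,j)}$ and applying the SLLN separately to numerator and denominator, together with $\P(P_j=i)>0$ to avoid a zero limit in the denominator. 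Everything else is bookkeeping with conditional expectations and variances under the uniform product measure on paths.
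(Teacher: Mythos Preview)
Your proposal is correct and follows essentially the same route as the paper: you write $T^{(n)}(i,j)$ as a ratio of two averages and apply the SLLN to numerator and denominator (the paper does exactly this via $\frac{n}{|D^{(n)}_{(i,j)}|}\cdot\frac1n\sum\ldots$), then identify the limiting conditional mean and variance as $\E[S(i,j)]+M_j$ and $\V[S(i,j)]+W_j$ with $M_j,W_j$ depending only on the column, so that subtraction cancels them. The paper's $A_j$ is precisely your $W_j$, and your explicit formulas for $M_j,W_j$ just make concrete what the paper leaves as ``elementary calculations''.
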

\begin{proof}[\textbf{Proof}] We give the proof from \cite{GuMu:19} since it illustrates the idea behind the construction of the estimators. Using twice the law of large numbers and the continuous mapping theorem, we obtain
\begin{eqnarray*}
T^{(n)}(i,j) 
&= &
\frac{1}{|D^{(n)}_{(i,j)}|} 
\sum_{\vec{p} \in D^{(n)}_{(i,j)} }
b(\vec{p}) \cr
&=& \frac{n}{|D^{(n)}_{(i,j)}|} \frac1n
\sum_{\vec{p} \in D^{(n)}_{(i,j)} }
b(\vec{p}) \cr
&\xrightarrow[n\rightarrow \infty]{a.s.}& \frac{1}{\P(\rP[j]=i)} \E[b(\rP); \rP[j]=i'] = \E\left[b(\rP) \, | \, \rP[j]=i\right].
\end{eqnarray*}
In the same way
\begin{equation*}
T^{(n)}(i',j) 
= 
\frac{1}{|D^{(n)}_{(i',j)}|} 
\sum_{\vec{p} \in D^{(n)}_{(i',j)} }
b(\vec{p})
\xrightarrow[n\rightarrow \infty]{a.s.}
\E\left[b(\rP) \, | \, \rP[j]=i'\right].
\end{equation*}
Using the assumption that the paths are chosen uniformly and the definition of $b(\rP)$, we obtain the first part of the theorem from
\begin{eqnarray*}
\E\left[b(\rP)| \, \rP[j]=i\right] - \E\left[b(\rP) \, | \, \rP[j]=i'\right] &=& \frac{\E\left[b(\rP);  \rP[j]=i\right]- \E\left[b(\rP); \rP[j]=i'\right] }{\P(\rP[j]=i)} \cr
&=& \frac{\E\left[S(i,j);  \rP[j]=i\right]- \E\left[S(i',j); \rP[j]=i\right] }{\P(\rP[j]=i)} \cr
&=& \E[S(i,j)] - \E[S(i',j)].
\end{eqnarray*}
For the second part of the theorem, we use the law of large numbers and the continuous mapping theorem to get that
\begin{eqnarray*}
\Sigma^{(n)}(i,j) 
&=&  
\left(\frac{1}{|D_{ij}^{(n)}|} \sum_{\vec{p} \in D_{ij}^{(n)} }  b^2(\vec{p})
\right)
-  \left(
T^{(n)}(i,j)
\right)^{2}
\cr
&\xrightarrow[n\rightarrow \infty]{a.s.}&
\E\left[b(\rP)^{2}| \, \rP[j]=i\right] -\E\left[b(\rP)| \, \rP[j]=i\right]^{2}.
\end{eqnarray*}
Using the definition of  $b(\path)$ we deduce  with elementary calculations that
$$
\E\left[b(\rP)^{2}| \, \rP[j]=i\right] -\E\left[b(\rP)| \, \rP[j]=i\right]^{2}
=
A_j
+
\var{S(i,j)}
$$
where $A_j$ is a quantity that only depend on the column $j$. Applying this identify for $(i,j)$ and $(i',j)$ we obtain that
$$
\Sigma^{(n)}(i,j) - \Sigma^{(n)}(i',j)
\xrightarrow[n \rightarrow \infty]{a.s.}
\var{S(i,j)} - \var{S(i',j)}.
$$
\end{proof}

\section{Markov chain transition model.}\label{sec:Markov}
In this Section, we will treat the more general case where the random path $\rP$ is the path of a  Markov chain. This case corresponds to the situation where the covariates may be dependent, and their correlation follows a Markov structure.

More formally,  we consider the time-inhomogeneous Markov chain on $\{1, \ldots, r\}$ with initial state $s$ and absorbing state $t$.
At the time $0$, we start the Markov chain in the source $s$. 
 The transition kernel for the first step is given by 
 \begin{equation}
 Q^{(1)}(s, i) 
=
\begin{cases}
q^{(1)}_{s,i},    \mbox{ if }\, i \leq r_{1};\\
0,  \mbox{ otherwise.}
\end{cases}
\label{eq:condition1}
\end{equation}
The next $c-1$ steps are defined as follows. For $2\leq k \leq c$:
\begin{equation}
 Q^{(k)}(i,j)
=
\begin{cases}
q^{(k)}_{(i,j)},   \mbox{ if }\, i\in[r_{k-1}], j\in[r_{k}]);\\
0,  \mbox{ otherwise}.
\end{cases}
\label{eq:condition2}
\end{equation}
Finally, the last step is determined by
\begin{equation}
 Q^{(c+1)}(i, t) =
\begin{cases}
1,  \mbox{ if } i\in[r_{c}];
\\
0,  \mbox{ otherwise.}
\end{cases}
\label{eq:condition3}
\end{equation}
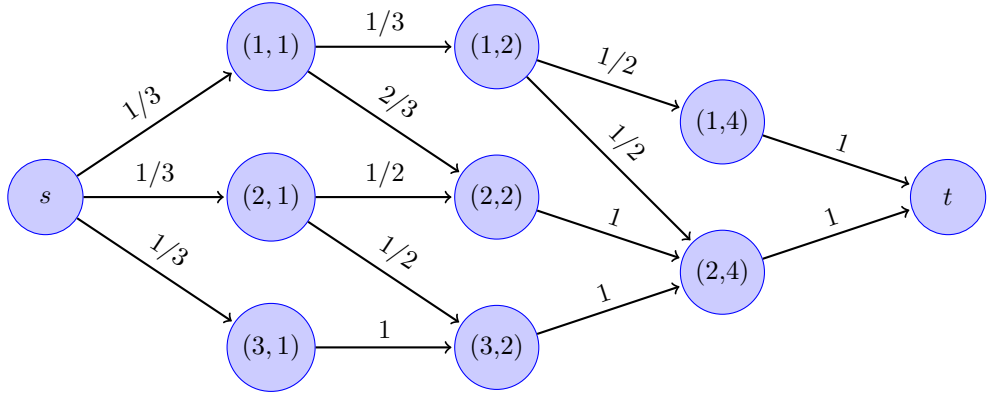
\begin{figure}
\tikzset{
node_style/.style={circle,draw=blue,fill=blue!20!, minimum size=1cm},
edge_style/.style={draw=black,  thick, ->},
label/.style={sloped,above}
}
  \centering
\begin{tikzpicture}[shorten >=1pt,
   auto,
   node distance=2cm and 10cm]
    \node[node_style] (s) at (0,-2) {$s$};
     \foreach \y/\t in {1/1, 2/1, 3/1}{
     \node[node_style]  (v\y) at (3, -2*\y+2) {$(\y,\t)$};
     \draw [edge_style]  (s) -- node [label] {$1/3$}  (v\y) ;}
       
     \foreach \y/\t in {1/2, 2/2, 3/2}
     \node[node_style]  (w\y) at (6, -2*\y+2) {(\y,\t)};
     \foreach  \y in {1} 
        \foreach  \w in {1} 
           \draw [edge_style]  (v\y) -- node [label] {$1/3$}  (w\w);
      \foreach  \y in {1} 
        \foreach  \w in {2} 
           \draw [edge_style]  (v\y) -- node [label] {$2/3$}  (w\w);
      \foreach  \y in {2} 
        \foreach  \w in {2,3} 
           \draw [edge_style]  (v\y) -- node [label] {$1/2$}  (w\w);
       \foreach  \y in {3} 
        \foreach  \w in {3} 
           \draw [edge_style]  (v\y) -- node [label] {$1$} (w\w);

      \foreach \y/\t in {1/4, 2/4}
     \node[node_style]  (y\y) at (9, -2*\y+1) {(\y,\t)};
     \foreach  \y in {1} 
        \foreach  \w in {1,2} 
           \draw [edge_style]  (w\y) -- node [label] {$1/2$}  (y\w);
      \foreach  \y in {2,3} 
        \foreach  \w in {2} 
           \draw [edge_style]  (w\y) -- node [label] {$1$}  (y\w);
      
      \node[node_style] (t) at (12,-2) {$t$};     
           \foreach  \w in {1,2} 
            \draw [edge_style]   (y\w) -- node [label] {$1$} (t);
    \end{tikzpicture}
    \caption{An example of the Markov chain transition model.}
 \end{figure}
The matrices $Q^{(k)}, k\in[c+1]$, are all supposed to be stochastic matrices.
Equation~\eqref{eq:condition1} forces the initial state $s$ to jump into a state of the first column. Equation~\eqref{eq:condition2}
forces a node in column $j$ to jump into a node of column $j+1$. In the case that the
node is in the last column $c$, Equation~\eqref{eq:condition3} forces it to jump into the final absorbing state $t$. 

\begin{remark}\label{rem:unifasMarkov}
The uniform transition model is a special case of the Markov chain transition model. In fact, setting $q^{(1)}_{s,i}:=1/ r_{i},~ i\in[r_{1}]$ and $q^{(k)}_{i,j}:=1/ r_{k},~i\in[r_{k-1}], j\in[r_{k}]$ yields the uniform transition model.
\end{remark}

A Markov chain, as given above, defines a natural probability measure on the paths in the DAG. The measure that chooses a path according to $Q:=(Q^{(1)}, Q^{(2)}, \ldots Q^{(c+1)})$, and the values along this path according to $S$  is  denoted by $\P^{Q}$. We define the support of the measure $Q$ as 
\begin{equation*}
\supp^{Q}   :=  \left\{   \path : \P^{Q}(\rP = \path) > 0 \right\}.\\
\end{equation*}
A node is reachable if there exists a path with positive probability through this node. We enumerate the reachable nodes in each column starting from $1$. The set 
\begin{equation*}
\supp^{Q}_{i,j}:= \{  \path\in \supp^{Q} :  \path[j] = i \}
\end{equation*}
is the set of all paths of positive probability passing through $\ij$. We consider the (multi-)set $D^{(n)} = (\rP_1, \ldots, \rP_n )$ of the first $n$ paths and the multi-set  $D^{(n)}_{\ij}$ containing only  those paths passing through $\ij$.

\begin{remark}\label{rem:MChainEqualDistribution}
Let $(i, k)$ and $(i', k)$ be two $Q$-reachable nodes in the same column.
A consequence of the proof of Theorem \ref{thm:lln} is that if for all $\ell\in[r]$
\begin{eqnarray*}
q^{(k)}(\ell,i) &=& q^{(k)}(\ell, i'), \cr
 q^{(k+1)}(i, \ell) &=& q^{(k+1)}(i', \ell),
\end{eqnarray*}
then
\begin{equation*}
T^{(n)}\ij - T^{(n)} (i',j)\xrightarrow[n\rightarrow \infty]{\P^{Q}-a.s.}  \E[S(i,j)]- \E[S(i',j)].
\end{equation*}
and 
\begin{equation*}
	\Sigma^{(n)}(i,j) - \Sigma^{(n)}(i',j)\xrightarrow[n\rightarrow \infty]{\P^{Q}-a.s.} \V[S(i,j)] - \V[S(i',j)].
\end{equation*}
\end{remark}
 Let $Q:=(Q^{(1)}, Q^{(2)}, \ldots, Q^{(c+1)})$ et $\widetilde Q:=(\widetilde Q^{(1)}, \widetilde Q^{(2)}, \ldots \widetilde Q^{(c+1)})$ be two sequences of  transition matrices. We say that $Q$ and $\widetilde Q$ are { equivalent} (as measures) if 
 \begin{equation*}
Q^{(k)}(i,j) >0  \Longleftrightarrow \widetilde Q^{(k)}(i,j)  >0,
\end{equation*}
for all meaningful choices of $i,j$ and $k$. In particular, $Q$ and $ \widetilde Q$ are equivalent if  $P^{Q}=P^{\widetilde Q}$. Moreover, the induced measures on the set of paths are equivalent and there exists a discrete Radon-Nykodym derivative that allows a change of measures.

\begin{lemma}\label{lemma:MeasureChange}
Let $(i, j)$ be a node and let $Q,  \widetilde Q$ be two equivalent sequences of transition matrices and let $f: P^{Q}\times \R^{r\times c} \rightarrow \R$ be a measurable function, then
$$
\E^{Q}\left[f(\rP, S) \frac{\P^{\widetilde{Q}}\left(\rP'=\rP | \rP'[j]=i\right)}{\P^{Q}\left(\rP'=\rP | \rP'[j]=i\right)} \middle| \rP[j]=i\right]
=
\E^{\widetilde{Q}}[f(\rP,S) | \rP[j]=i]
$$where $\rP'$ is an independent copy of $\rP$.
\end{lemma}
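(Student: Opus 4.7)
The plan is to exploit the discreteness of the path space together with the fact that, by Assumption \ref{ass1}, the quality matrix $S$ is independent of the path and its distribution does not depend on the transition kernel. This lets me write both sides as finite sums over $\supp^Q_{i,j}$ and cancel terms.

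First I would unfold the left-hand side. Since $\rP$ and $S$ are independent and $\rP$ takes values in a finite set, for any fixed $\vec p \in \supp^Q_{i,j}$ the quantity $f(\vec p, S)$ has an expectation $\E[f(\vec p, S)]$ that is the same under $\P^Q$ and $\P^{\widetilde Q}$. Hence
$$
\E^{Q}\!\left[f(\rP, S) \frac{\P^{\widetilde{Q}}(\rP'=\rP \mid \rP'[j]=i)}{\P^{Q}(\rP'=\rP \mid \rP'[j]=i)} \,\middle|\, \rP[j]=i\right]
= \sum_{\vec p \in \supp^Q_{i,j}} \E[f(\vec p, S)]\,\frac{\P^{\widetilde Q}(\rP'=\vec p \mid \rP'[j]=i)}{\P^{Q}(\rP'=\vec p \mid \rP'[j]=i)}\,\P^{Q}(\rP=\vec p \mid \rP[j]=i).
$$
The equivalence of $Q$ and $\widetilde Q$ ensures $\supp^Q_{i,j}=\supp^{\widetilde Q}_{i,j}$, so the denominator is strictly positive on the index set of the sum and the Radon--Nikodym ratio is well-defined.

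Second, since $\rP'$ is an independent copy of $\rP$ under the corresponding measure, the denominator $\P^{Q}(\rP'=\vec p \mid \rP'[j]=i)$ equals $\P^{Q}(\rP=\vec p \mid \rP[j]=i)$ and cancels. What remains is
$$
\sum_{\vec p \in \supp^{\widetilde Q}_{i,j}} \E[f(\vec p, S)]\,\P^{\widetilde Q}(\rP=\vec p \mid \rP[j]=i) \;=\; \E^{\widetilde Q}[f(\rP,S) \mid \rP[j]=i],
$$
which is the right-hand side.

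The only thing that deserves care rather than being a real obstacle is the bookkeeping around equivalence: the ratio inside the expectation on the left is a priori a $0/0$-indeterminate form off $\supp^Q$, but the conditioning on $\{\rP[j]=i\}$ restricts the summation to $\supp^Q_{i,j}$, which by equivalence coincides with $\supp^{\widetilde Q}_{i,j}$, so nothing is lost nor ill-defined. The independence of $S$ from $\rP$, which makes $\E[f(\vec p,S)]$ a quantity that survives the change of path measure unaltered, is the conceptual core of the proof.
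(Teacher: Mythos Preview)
Your proof is correct and follows essentially the same route as the paper's: expand the conditional expectation as a finite sum over $\supp^Q_{i,j}$, use the independence of $S$ from $\rP$ so that $\E[f(\vec p,S)]$ does not depend on the path measure, and cancel the $\P^{Q}$-weights against the denominator of the Radon--Nikodym ratio. The only cosmetic difference is direction---the paper starts from $\E^{\widetilde Q}[f(\rP,S)\mid\rP[j]=i]$ and inserts the ratio, whereas you start from the $\E^{Q}$-side and cancel it---but the content is identical.
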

\begin{proof}[\textbf{Proof}]
The proof follows from the theorem of total probability, the independence of $S$ and $\rP$, and a change of measure:
\begin{eqnarray*}
\E^{\widetilde{Q} }[f(\rP, S) | \rP[j]=i] 
&=& \sum_{{\path:\path[j]=i}} \E^{\widetilde{Q} }\left[{f(\path, S)}  \right]  \P^{\widetilde{Q} }\left(\rP=\path | \rP[j]=i\right) \cr
&=&\sum_{ {\path \in \mbox{ \footnotesize Supp}^Q_{i,j}} }  \E^{Q } \left[{f(\path,S)} \right]  \P^{{Q}}\left(\rP=\path | \rP[j]=i\right) \frac{\P^{\widetilde{Q} }\left(\rP=\path | \rP[j]=i\right)}{\P^{Q }\left(\rP=\path | \rP[j]=i\right)} \cr
&=&\sum_{{\path \in \mbox{ \footnotesize Supp}^Q_{i,j}}} \E^{Q }\left[{f(\path,S)}\frac{\P^{\widetilde{Q} }\left(\rP=\path | \rP[j]=i\right)}{\P^{Q }\left(\rP=\path | \rP[j]=i\right)}  \right]  \P^{Q }\left(\rP=\path | \rP[j]=i\right)  \cr
&=& \E^{Q }\left[{f(\rP,S)} \frac{\P^{\widetilde{Q} }\left(\rP'=\rP | \rP'[j]=i\right)}{\P^{Q }\left(\rP'=\rP | \rP'[j]=i\right)} \middle| \rP[j]=i\right].
\end{eqnarray*}
\end{proof}
We can now  define the estimators for the means and variances.
\begin{defn}
We define 
\begin{eqnarray*}
T^{(n)}_{Q, \widetilde Q}(i,j) := \frac{1}{|D^{(n)}_{(i,j)}|} \sum_{\path \in D^{(n)}_{\ij}} \left( b(\path)   \frac{\P^{\widetilde Q}\left(\rP=\path \middle| \rP[j]=i \right)}{\P^{Q}\left(\rP=\path \middle| \rP[j]=i \right)}\right)
\end{eqnarray*}
and 
\begin{eqnarray*}
\Sigma^{(n)}_{Q,\widetilde Q}(i,j) &:=&
   \left(\frac{1}{|D^{(n)}_{(i,j)}|}
   \sum_{_{\path \in D^{(n)}_{\ij}}}
   b^2(\path)   \frac{\P^{\widetilde Q}\left(\rP=\path \middle| \rP[j]=i \right)}{\P^{Q}\left(\rP=\path \middle| \rP[j]=i \right)}
   \right)
   -
   \left(
	   T^{(n)}_{Q, \widetilde Q}(i,j).
   \right)^2
\end{eqnarray*}
\end{defn}

\begin{thm}\label{thm:markovlln}
  Let $(i, j)$ be a node and let $Q, \widetilde Q$ be two equivalent sequences of transition matrices.
  Then,
  \[
  T^{(n)}_{Q, \widetilde Q}(i,j) 
    \xrightarrow[n\rightarrow \infty]{\P^{Q}-a.s.}
    \E^{\widetilde Q}\left[b(\rP)  \middle| \rP[j]=i \right],
  \]
   \[
   \Sigma^{(n)}_{Q,\widetilde Q}(i,j) \xrightarrow[n\rightarrow \infty]{\P^{Q}-a.s.}
   \V^{\widetilde Q}[b^2(\rP)  | \rP[j]=i].
 \]
\end{thm}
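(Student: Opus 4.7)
My plan is to apply the strong law of large numbers (SLLN) to the sums defining $T^{(n)}_{Q,\widetilde Q}(i,j)$ and $\Sigma^{(n)}_{Q,\widetilde Q}(i,j)$ when rewritten as empirical averages over the full i.i.d.\ sample $(\rP_1, b(\rP_1)), \ldots, (\rP_n, b(\rP_n))$, and then identify the limits via Lemma \ref{lemma:MeasureChange}. Set
\[
W(\path) := \frac{\P^{\widetilde Q}(\rP'=\path \mid \rP'[j]=i)}{\P^{Q}(\rP'=\path \mid \rP'[j]=i)}, \qquad \path \in \supp^{Q}_{i,j}.
\]
Since $\supp^{Q}$ is finite and $Q$ and $\widetilde Q$ are equivalent, $W$ is a bounded positive function on $\supp^{Q}_{i,j}$. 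Note the bookkeeping identity
\[
T^{(n)}_{Q,\widetilde Q}(i,j) \;=\; \frac{n}{|D^{(n)}_{(i,j)}|}\cdot\frac{1}{n}\sum_{k=1}^{n} b(\rP_k)\,W(\rP_k)\,\ind\{\rP_k[j]=i\}.
\]

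The first factor converges $\P^{Q}$-a.s.\ to $1/\P^{Q}(\rP[j]=i)$ by the SLLN applied to the indicators $\ind\{\rP_k[j]=i\}$. For the second factor, the summands are i.i.d.\ under $\P^{Q}$, with finite first moment because $|b(\rP)W(\rP)\ind\{\rP[j]=i\}| \le \|W\|_\infty \cdot \sum_{j=1}^{c}|S(\rP[j],j)|$ and each $S(i,j)$ is in $L^1$ by Assumption \ref{ass1}(1). The SLLN therefore yields
\[
\frac{1}{n}\sum_{k=1}^{n} b(\rP_k)\,W(\rP_k)\,\ind\{\rP_k[j]=i\} \xrightarrow[n\to\infty]{\P^{Q}-a.s.} \E^{Q}\!\left[b(\rP)\,W(\rP)\,;\,\rP[j]=i\right].
\]
Dividing by $\P^{Q}(\rP[j]=i)$ and applying Lemma \ref{lemma:MeasureChange} with $f(\path, S) = b(\path)$ converts the right-hand side exactly into $\E^{\widetilde Q}[b(\rP)\mid \rP[j]=i]$, proving the first assertion.

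For the variance estimator, the same argument applied to $b^2$ in place of $b$ delivers
\[
\frac{1}{|D^{(n)}_{(i,j)}|}\sum_{\path\in D^{(n)}_{(i,j)}} b^2(\path)\,W(\path) \xrightarrow[n\to\infty]{\P^{Q}-a.s.} \E^{\widetilde Q}\!\left[b^2(\rP)\,\middle|\,\rP[j]=i\right],
\]
where the integrability needed for the SLLN is $b(\rP)W(\rP)\ind\{\rP[j]=i\}\in L^2(\P^Q)$, which follows from Assumption \ref{ass1}(1) since $b(\rP)$ is a sum of $c$ entries of $S$ each in $L^2$ and $W$ is bounded. Combining this with the already established a.s.\ convergence of $T^{(n)}_{Q,\widetilde Q}(i,j)$ and the continuous mapping theorem gives
\[
\Sigma^{(n)}_{Q,\widetilde Q}(i,j) \xrightarrow[n\to\infty]{\P^{Q}-a.s.} \E^{\widetilde Q}\!\left[b^2(\rP)\mid \rP[j]=i\right] - \E^{\widetilde Q}\!\left[b(\rP)\mid \rP[j]=i\right]^2,
\]
which is the conditional variance of $b(\rP)$ under $\widetilde Q$.

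The only subtle point is that the Radon--Nikodym weights inside the empirical sum are functions of the random path itself, so one must be careful to write the estimator as an average over the full sample (with an indicator) rather than only over $D^{(n)}_{(i,j)}$, so that the SLLN applies and the limit can be matched with the left-hand side of Lemma \ref{lemma:MeasureChange}. Once this reformulation is made, the rest is bookkeeping; I expect no real obstacle beyond verifying integrability, which is immediate from the finite-support structure of $\rP$ and the second-moment assumption on $S$.
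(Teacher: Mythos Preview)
Your proposal is correct and follows essentially the same route as the paper: rewrite the estimator as $\frac{n}{|D^{(n)}_{(i,j)}|}$ times an empirical average over the full i.i.d.\ sample, apply the SLLN twice together with the continuous mapping theorem, and then invoke Lemma~\ref{lemma:MeasureChange} with $f=b$ (resp.\ $f=b^{2}$) to identify the limit. Your additional remarks on integrability via boundedness of $W$ on the finite path support are a welcome clarification not spelled out in the paper, but do not change the underlying argument.
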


\begin{proof}[\textbf{Proof}]
The first observation is that
\begin{equation*}
T^{(n)}_{Q, \widetilde Q}(i,j) = \frac{n}{|D^{(n)}_{(i,j)}|} \cdot \frac1n \sum_{k=1}^{n} \left( \ind\{\path_{k}[j]=i\} b(\path_{k})   \frac{\P^{\widetilde Q}\left(\rP=\path_{k} \middle| \rP[j]=i \right)}{\P^{Q}\left(\rP=\path_{k} \middle| \rP[j]=i \right)}\right).
\end{equation*}
Using twice the law of large numbers and the continuous mapping theorem we obtain
\begin{eqnarray*}
T^{(n)}_{Q, \widetilde Q}(i,j) & \xrightarrow[n\rightarrow \infty]{\P^{Q}-a.s.} & \P^{Q}( \rP[j]=i)^{-1} \E^{Q} \left[ \ind\{\rP[j]=i\} b(\rP)   \frac{\P^{\widetilde Q}\left(\rP'=\rP \middle| \rP'[j]=i \right)}{\P^{Q}\left(\rP'=\rP \middle| \rP'[j]=i \right)}    \right] \cr
&=& \E^{Q} \left[  b(\rP)   \frac{\P^{\widetilde Q}\left(\rP'=\rP \middle| \rP'[j]=i \right)}{\P^{Q}\left(\rP'=\rP \middle| \rP'[j]=i \right)} \middle| \rP[j]=i   \right],
\end{eqnarray*}
where $\rP'$ is an independent copy of  $\rP$.
Now, using Lemma~\ref{lemma:MeasureChange} with $f=b$ and recalling that $b(\rP)=b(\rP,S)$, we get that:
$$
\E^{ Q}\left[b(\rP) \frac{\P^{\widetilde Q}\left(\rP'=\rP | \rP'[j]=i\right)}{\P^{ Q}\left(\rP'=\rP | \rP'[j]=i\right)} \middle| \rP[j]=i\right]
=
\E^{\widetilde Q}[b(\rP) | \rP[j]=i].
$$
For the second part of the theorem, we note that
\begin{align*}
\Sigma^{(n)}_{Q,\widetilde Q}(i,j)
&=
\left(\frac{1}{|D^{(n)}_{(i,j)}|}
\sum_{_{\path \in D^{(n)}_{\ij}}}
b^2(\path)   \frac{\P^{\widetilde Q}\left(\rP=\path \middle| \rP[j]=i \right)}{\P^{Q}\left(\rP=\path \middle| \rP[j]=i \right)}
\right) - \left( T^{(n)}_{Q, \widetilde Q}(i,j) \right)^2
\\& \xrightarrow[n\rightarrow \infty]{\P^{Q} a.s.}
\E^{Q}\left[b^2(\rP)   \frac{\P^{\widetilde Q}\left(\rP'=\rP \middle| \rP'[j]=i \right)}{\P^{Q}\left(\rP'=\rP \middle| \rP'[j]=i \right)}\middle| \rP[j]=i  \right]
-
\left( \E^{\widetilde Q } \left[ b(\rP) \bigg|\rP[j]=i \right] \right)^2.
\end{align*}
Using again Lemma~\ref{lemma:MeasureChange} but this time with $f= b^2$, we get that
\begin{equation*}
\E^{Q}\left[b^2(\rP)   \frac{\P^{\widetilde Q}\left(\rP'=\rP \middle| \rP'[j]=i \right)}{\P^{Q}\left(\rP'=\rP \middle| \rP'[j]=i \right)}
\middle|
\rP[j]=i
\right] =
\E^{\widetilde Q} \left[ b^2(\rP)\middle|\rP[j]=i\right],
\end{equation*}
therefore
$$
\Sigma_{Q,\widetilde Q}(i,j)
\xrightarrow[n\rightarrow \infty]{\P^{Q} a.s.}
\E^{\widetilde Q}
\left[
b^2(\rP)\bigg|\rP[j]=i\right]
-
\left(
\E^{\widetilde Q}
\left[
b(\rP)
\bigg|\rP[j]=i
\right]
\right)^2
=
\V^{\widetilde Q}[b^2(\rP)  | \rP[j]=i].
$$
\end{proof}

We obtain the following consequence of the Theorems \ref{thm:lln} and \ref{thm:markovlln}.
\begin{corollary}\label{cor:markovuniform}
Let $\widetilde Q=U$ be the transition matrix corresponding to the uniform transitions. If $Q$ and $\widetilde Q$ are equivalent transition matrices then
\begin{equation*}
T^{(n)}_{Q, U}(i,j) - T^{(n)}_{Q, U}(i',j)   \xrightarrow[n\rightarrow \infty]{\P^{Q}-a.s.}  \E[S(i,j)]- \E[S(i',j)],
\end{equation*}
and 
$$
	\Sigma_{Q,U}^{(n)}(i,j) - \Sigma_{Q,U}^{(n)}(i',j)\xrightarrow[n\rightarrow \infty]{\P^{Q}-a.s.} \V[S(i,j)] - \V[S(i',j)].
$$

\end{corollary}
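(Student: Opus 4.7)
The proof is essentially a chaining of Theorem~\ref{thm:markovlln} and (the computation inside) Theorem~\ref{thm:lln}, so my plan is to carry out that two-step composition carefully.

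First I would apply Theorem~\ref{thm:markovlln} with $\widetilde Q = U$. Since $Q$ and $U$ are by hypothesis equivalent, the theorem directly yields
\[
T^{(n)}_{Q,U}(i,j) \xrightarrow[n\to\infty]{\P^{Q}\text{-a.s.}} \E^{U}\!\left[b(\rP)\,\middle|\, \rP[j]=i\right]
\qquad\text{and}\qquad
\Sigma^{(n)}_{Q,U}(i,j) \xrightarrow[n\to\infty]{\P^{Q}\text{-a.s.}} \V^{U}\!\left[b(\rP)\,\middle|\, \rP[j]=i\right].
\]
The analogous statement holds at $(i',j)$. So the almost sure convergence of the differences under $\P^{Q}$ is immediate; what remains is identifying the limits.

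Second, I would show that the right-hand sides above, once differenced between $(i,j)$ and $(i',j)$, reduce to $\E[S(i,j)]-\E[S(i',j)]$ and $\V[S(i,j)]-\V[S(i',j)]$ respectively. This is exactly the algebraic content already extracted in the proof of Theorem~\ref{thm:lln}: under the uniform measure, using $b(\rP)=\sum_{\ell=1}^{c} S(\rP[\ell],\ell)$, independence of the entries of $S$, and the fact that for every other column $\ell\ne j$ the conditional law of $\rP[\ell]$ given $\rP[j]=i$ coincides with that given $\rP[j]=i'$ (this is where uniformity of $U$ is used), the cross-column contributions $A_j$ cancel in the mean difference, leaving only $\E[S(i,j)]-\E[S(i',j)]$; the same cancellation occurs in the variance decomposition, leaving $\V[S(i,j)]-\V[S(i',j)]$.

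Since no new estimate or concentration argument is required, there is no real obstacle: the only thing one must be careful about is that Theorem~\ref{thm:lln} is stated for convergence under the uniform measure, while here the almost sure convergence is under $\P^Q$. That point is handled by Theorem~\ref{thm:markovlln}, which transports the limit from $\P^U$ to $\P^Q$ via the change of measure of Lemma~\ref{lemma:MeasureChange}, so that the identification of the limits is a purely deterministic computation on the conditional expectations $\E^{U}[b(\rP)\mid \rP[j]=i]$ and $\V^{U}[b(\rP)\mid \rP[j]=i]$, which is exactly what Theorem~\ref{thm:lln} already performed.
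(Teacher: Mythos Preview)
Your proposal is correct and matches the paper's approach exactly: the paper presents this corollary as an immediate consequence of Theorems~\ref{thm:lln} and~\ref{thm:markovlln} without writing out a separate proof, and your two-step chaining (Theorem~\ref{thm:markovlln} for the $\P^{Q}$-a.s.\ convergence to the $U$-conditional moments, then the algebraic identification from the proof of Theorem~\ref{thm:lln}) is precisely that consequence spelled out. Your remark that the limit identification is a purely deterministic computation on $\E^{U}[\cdot\mid\rP[j]=i]$, so that the change of sampling measure from $U$ to $Q$ is irrelevant at that stage, is the right way to reconcile the two theorems.
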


\begin{remark} In the case $\widetilde Q=U$ the estimators can be written in a more explicit form. For example,
\begin{equation*}
T^{(n)}_{Q, U}(i,j) = \frac{ \prod_{\ell\neq j} (r_{\ell})^{-1} }{|D^{(n)}_{(i,j)}|}  \sum_{\path \in D^{(n)}_{\ij}} \left( b(\path) \frac{ \P^{Q}\left( \rP[j]=i \right)}{\P^{Q}\left(\rP=\path \right)}\right).
\end{equation*}
\end{remark}

\section{Asymptotic distribution of the estimators}\label{sec:asympt}

\subsection{Known $Q$ and $\widetilde Q$}

In the case where  both distributions, $Q$ and $\widetilde Q$, are known, we deduce the asymptotic distribution of the estimators $T^{(n)}_{Q, \widetilde Q}(i,j) $ and $\Sigma^{(n)}_{Q,\widetilde Q}(i,j)$.
\begin{thm}\label{thm:Qknown}
We have that
\begin{equation*}
	\sqrt{|D^{(n)}_{(i,j)}|}
	(T_{Q,\widetilde Q}^{(n)}(i,j) - \mu_{i,j})
	\xrightarrow[n \rightarrow \infty]{{\P^{Q}-}distr.}
	\mathcal{N}(0, \sigma_{i,j}^{2})
\end{equation*}
where
\begin{equation*}
	\mu_{i,j} 
	= 
	\E^{\widetilde Q }
	[ b(\rP) | \rP[j]=i ] 
	\quad \mbox{ and } \quad
	\sigma^2_{i,j}
	=
	\E^{{\widetilde Q}}[b(\rP)^2|\rP[j]=i]
	-
	\E^{{\widetilde Q}}[b(\rP)|\rP[j]=i]^2.
\end{equation*}
If we assume that the entries of $S$ have finite forth moments, we have that
\begin{equation*}
\sqrt{|D^{(n)}_{(i,j)}|} \left(\Sigma^{(n)}_{Q,\widetilde Q}(i,j) -\sigma^2_{i,j}\right)
\xrightarrow[n\rightarrow \infty]{\P^{Q}-distr.}
\mathcal{N}(0, {(1,-1)\Sigma_{i,j}(1,-1)^T} ),
\end{equation*}
 where
{\begin{equation*}
\Sigma_{i,j} =
 \begin{bmatrix}
\mathbb{V}^{Q}\left[ b^2(\rP)C_{i,j}(\rP) \big| \rP[j]=i \right] & 2 \mu_{i,j} \mathrm{Cov}^Q\left[ b^2(\rP)C_{i,j}(\rP), b(\rP)C_{i,j}(\rP) \big| \rP[j]=i\right] \\
2 \mu_{i,j}\mathrm{Cov}^Q\left[ b^2(\rP)C_{i,j}(\rP), b(\rP)C_{i,j}(\rP) \big| \rP[j]=i\right]  &
4 \mu_{i,j}^{2}\mathbb{V}^{Q}\left[b(\rP) C_{i,j} (\rP) \big| \rP[j]=i \right] \\
\end{bmatrix}
\end{equation*}}
and $C_{i,j}(\path)= \frac{\P^{\widetilde Q}\left(\rP=\path \middle| \rP[j]=i \right)}{\P^{Q}\left(\rP=\path \middle| \rP[j]=i \right)}$.
\end{thm}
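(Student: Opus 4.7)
The key observation is that both estimators are smooth functions of (vector-valued) sample means taken over the random subsequence of paths passing through $(i,j)$. The natural tool is therefore a random-index central limit theorem of Anscombe type combined with the change of measure from Lemma~\ref{lemma:MeasureChange}, followed, for the variance estimator, by the delta method. Write $N_n := |D^{(n)}_{(i,j)}|$ and enumerate the paths in $D^{(n)}_{(i,j)}$ in the order in which they appear in $\rP_1, \rP_2, \ldots$. Conditionally on $N_n$, these ordered paths are i.i.d.\ copies of $\rP$ distributed according to $\P^{Q}(\cdot \mid \rP[j]=i)$; since $(i,j)$ is $Q$-reachable, the strong law gives $N_n \to \infty$ and $N_n/n \to \P^{Q}(\rP[j]=i) > 0$ almost surely.

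For the first assertion I would introduce $Y_k := b(\path_k)\, C_{i,j}(\path_k)$, an i.i.d.\ sequence with conditional mean $\mu_{i,j}$ (apply Lemma~\ref{lemma:MeasureChange} with $f = b$). The classical Anscombe theorem applied to $T^{(n)}_{Q,\widetilde Q}(i,j) = N_n^{-1}\sum_{k=1}^{N_n} Y_k$ then yields $\sqrt{N_n}\bigl(T^{(n)}_{Q,\widetilde Q}(i,j) - \mu_{i,j}\bigr) \Rightarrow \mathcal N\bigl(0,\, \V^{Q}[Y_1 \mid \rP_1[j]=i]\bigr)$ under $\P^{Q}$, and a second application of Lemma~\ref{lemma:MeasureChange} (this time absorbing one factor of $C_{i,j}$ into the test function $f = b^2 C_{i,j}$) rewrites the second moment $\E^{Q}[b^2\, C_{i,j}^2 \mid \rP[j]=i]$ in terms of $\P^{\widetilde Q}$-expectations, yielding the announced $\sigma_{i,j}^2$.

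For the second assertion I would lift the argument to $\R^2$. Set $Z_k := \bigl(b^2(\path_k)\, C_{i,j}(\path_k),\; b(\path_k)\, C_{i,j}(\path_k)\bigr)$; the fourth-moment assumption on the entries of $S$ ensures that $Z_k$ has a finite covariance matrix, and Lemma~\ref{lemma:MeasureChange} identifies its conditional mean as $\bigl(\E^{\widetilde Q}[b^2(\rP) \mid \rP[j]=i],\; \mu_{i,j}\bigr)$. A multidimensional version of Anscombe's theorem (of the sort flagged in the abstract's keyword list) then produces the joint CLT $\sqrt{N_n}\bigl(\bar Z_n - \E^{Q}[Z_1 \mid \rP_1[j]=i]\bigr) \Rightarrow \mathcal N(0, \Sigma_0)$, where $\Sigma_0$ is simply the ordinary $\P^{Q}(\cdot \mid \rP[j]=i)$-covariance matrix of $Z_1$. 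Since $\Sigma^{(n)}_{Q,\widetilde Q}(i,j) = g(\bar Z_n)$ for the smooth map $g(u,v) = u - v^2$, whose gradient at the limit is $(1, -2\mu_{i,j})$, the delta method converts the bivariate CLT into the claimed one-dimensional Gaussian. The matrix $\Sigma_{i,j}$ in the statement is exactly $\Sigma_0$ with the delta-method factors $-2\mu_{i,j}$ and $4\mu_{i,j}^2$ already folded into the off-diagonal and $(2,2)$ entries, so that $(1,-1)\Sigma_{i,j}(1,-1)^T$ reproduces the standard delta-method variance $\V^{Q}\bigl[b^2(\rP) C_{i,j}(\rP) - 2\mu_{i,j}\, b(\rP) C_{i,j}(\rP) \,\big|\, \rP[j]=i\bigr]$.

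The main obstacle is locating (or proving) a version of the multidimensional Anscombe theorem suited to the bivariate step; everything else reduces to routine delta-method bookkeeping together with repeated use of Lemma~\ref{lemma:MeasureChange} to move conditional moments between $\P^{Q}$ and $\P^{\widetilde Q}$.
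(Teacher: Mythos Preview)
Your proposal is essentially the paper's own proof: acceptance--rejection yields an i.i.d.\ sequence $Y_k=b(\path_k)\,C_{i,j}(\path_k)$ under $\P^{Q}(\cdot\mid\rP[j]=i)$, to which the paper applies one-dimensional Anscombe for the mean and then, for the variance, the multidimensional Anscombe theorem to the pair $(b^{2}C,\,bC)$ followed by the delta method---exactly your $Z_k$ and your map $g(u,v)=u-v^{2}$, except that the paper factors the delta step as $h(x,y)=(x,(y+\mu_{i,j})^{2}-\mu_{i,j})$ composed with the linear map $(x,y)\mapsto x-y$. Your ``main obstacle'' is precisely what the paper supplies in its appendix as Theorem~\ref{thm:MultiDimensionalAnscombe} (multidimensional Anscombe) together with an Anscombe version of the delta method, Theorem~\ref{thm:AscombesDeltaMethod}.
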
 

\begin{proof}[\textbf{Proof}]
Let us observe that
\begin{align*}\label{eq:EstimatorT'}
T^{(n)}_{Q, \widetilde Q}(i,j) 
&= 
\frac{1}{|D^{(n)}_{(i,j)}|} \sum_{\path \in D^{(n)}_{\ij}} b(\path) \frac{\P^{\widetilde Q}\left(\rP=\path \middle| \rP[j]=i \right)}{\P^{ Q}\left(\rP=\path \middle| \rP[j]=i \right)}
\\&=
{\frac{1}{|D^{(n)}_{(i,j)}|} \sum_{k=1}^{n} b(\rP_k) \frac{\P^{\widetilde Q}\left(\rP=\rP_k \middle| \rP[j]=i \right)}{\P^{ Q}\left(\rP=\rP_k \middle| \rP[j]=i \right)}\ind\left\{\rP[j]=i\right\}.}
\end{align*}
The last sum can be interpreted as a sum of i.i.d.~random variables appearing in an acceptance-rejection sampling. More precisely, we start with 
$\left(\rP_1,b(\rP_1) \frac{\P^{\widetilde Q}\left(\rP=\rP_1 \middle| \rP[j]=i \right)}{\P^{ Q}\left(\rP=\rP_1 \middle| \rP[j]=i \right)}\right)$ 
and $k=1$; if 
$\rP_k[j]=i$ then we set 
\begin{equation}
Y:=b(\rP_k) \frac{\P^{\widetilde Q}\left(\rP=\rP_k \middle| \rP[j]=i \right)}{\P^{ Q}\left(\rP=\rP_k \middle| \rP[j]=i \right)}
\end{equation}
and stop, otherwise we increase $k$ and repeat until $\rP_K[j]=i$ for the first K. Now, letting 
\begin{equation}
f(\rP_k) := b(\rP_k) \frac{\P^{\widetilde Q}\left(\rP=\rP_k \middle| \rP[j]=i \right)}{\P^{ Q}\left(\rP=\rP_k \middle| \rP[j]=i \right)}
\end{equation}
 we obtain for $y \in \R$ that
\begin{equation*}
\P^{ Q}[Y\leq y]
=
\sum_{k=1}^{\infty}\P^{Q}[f(\rP_k)\leq y | K=k] \P^{Q}[K=k]
=
\P^{ Q}[f(\rP_1) | \rP_1[j]=i];
\end{equation*}
this means that the distribution of $Y$ equals the distribution of $f(\rP_1)$ conditioned on {$\rP_{1}[j]=i.$} Iterating this acceptance-rejection method, we see that $|D_{i,j}^{(n)}|$ describes the number of acceptances among $(\rP_k, f(\rP_k))$ for $1\leq k \leq n$. Therefore, the estimator $T_{Q,\widetilde Q}^{(n)}$ has the same distribution as
$$
	\frac{1}{|D_{i,j}^{(n)}|}
	\sum_{k=1}^{|D_{i,j}^{(n)}|} 
	Y_k,
$$
where $Y_k$, $k \in \N$ is a sequence of i.i.d.\ random variables distributed as $f(\rP_1)$ and conditioned on $\rP_1[j]=i$. Finally, Anscombe's Theorem \cite{Gu:09} implies that
\begin{equation}
	\sqrt{|D^{(n)}_{(i,j)}|}
	(T_{Q,\widetilde Q}^{(n)}(i,j) - \mu_{i,j})
	\xrightarrow[n \rightarrow \infty]{{\P^{Q}-}distr.}
	\mathcal{N}(0, \sigma_{i,j})
\end{equation}
where
\begin{equation*}
	\mu_{i,j} 
	= 
	\E^{\widetilde Q }
	[ b(\rP_1) | \rP_1[j]=i ] 
	\quad \mbox{ and } \quad
	\sigma^2_{i,j} 
	=
	\E^{\widetilde Q}[f(\rP)^2|\rP[j]=i]
	-
	\E^{\widetilde Q}[f(\rP)|\rP[j]=i]^2.
\end{equation*}

For the variance, we assume that the entries of $S$ have finite fourth moments.
The estimator for the variance has the following form:
\begin{equation}
\Sigma^{(n)}_{Q,\widetilde Q}(i,j)
=
\left(\frac{1}{|D^{(n)}_{(i,j)}|}
\sum_{_{\path \in D^{(n)}_{\ij}}}
b^2(\path)   \frac{\P^{\widetilde Q}\left(\rP=\path \middle| \rP[j]=i \right)}{\P^{Q}\left(\rP=\path \middle| \rP[j]=i \right)}
\right) - \left( T^{(n)}_{Q, \widetilde Q}(i,j) \right)^2;
\end{equation}
We consider the following i.i.d.\ vectors $Y_k=(Y_{1,k},Y_{2,k})$ 
where $Y_{k}$ is distributed  as 
$$
\left(b^2(\rP)   \frac{\P^{\widetilde Q}\left(\rP'=\rP \middle| \rP'[j]=i \right)}{\P^{Q}\left(\rP'=\rP \middle| \rP'[j]=i \right)}, b(\rP)   \frac{\P^{\widetilde Q}\left(\rP'=\rP \middle| \rP'[j]=i \right)}{\P^{Q}\left(\rP'=\rP \middle| \rP'[j]=i \right)} \right)
$$  under $\P^{Q}$ conditioned on $\rP'[j]=i$. 
We write  $C_{i,j}(\path):= \frac{\P^{\widetilde Q}\left(\rP=\path \middle| \rP[j]=i \right)}{\P^{Q}\left(\rP=\path \middle| \rP[j]=i \right)}$ and, due to Lemma \ref{lemma:MeasureChange}, for $k\in\{1,2\}$
 \begin{equation}\label{eq:MuQToMuQTilde}
\mu_{i,j}^{(k)}:=\E^{{ Q}}[b(\rP)^k C_{i,j}(\rP) |\rP[j]=i] = \E^{\tilde{ Q}}[b(\rP)^k |\rP[j]=i],
\end{equation}
and covariance matrix 
{\begin{equation}
	\Sigma'_{i,j} = 
	\begin{bmatrix}
	\mathbb{V}^{Q}\left[ b^2(\rP)C_{i,j}(\rP) \big| \rP[j]=i \right] &  \mathrm{Cov}^Q\left[ b^2(\rP)C_{i,j}(\rP), b(\rP)C_{i,j}(\rP) \big| \rP[j]=i\right] \\
	\mathrm{Cov}^Q\left[ b^2(\rP)C_{i,j}(\rP), b(\rP)C_{i,j}(\rP) \big| \rP[j]=i\right]  &
	\mathbb{V}^{Q}\left[b(\rP) C_{i,j} (\rP) \big| \rP[j]=i \right] \\
\end{bmatrix}.
\end{equation}}

We conclude from Theorem~\ref{thm:MultiDimensionalAnscombe} that
\begin{equation}
\sqrt{|D^{(n)}_{(i,j)}|}
\left( 
\frac{1}{|D^{(n)}_{(i,j)}|}\sum_{_{\path \in D^{(n)}_{\ij}}}
\left(b^2(\path)C_{i,j}(\path) -\mu_{i,j}^{(2)}\right),
\frac{1}{|D^{(n)}_{(i,j)}|}\sum_{_{\path \in D^{(n)}_{\ij}}}
\left(b(\path)C_{i,j}(\path) -\mu_{i,j}^{(1)}\right)
\right)
\xrightarrow[n\rightarrow \infty]{\P^{Q}-distr.}
\mathcal{N}(0, \Sigma'_{i,j} ).
\end{equation}
  
Plugging in the definition of $T^{(n)}_{Q, \widetilde Q}(i,j)$ and using Anscombe's delta method (see Theorem~\ref{thm:AscombesDeltaMethod}) with the function $h(x,y)=\left(x, \left(y+\mu^{(1)}_{i,j}\right)^2 - \mu^{(1)}_{i,j} \right)$ we obtain
\begin{align}\label{eq:DeltaMethodApplication}
\sqrt{|D^{(n)}_{(i,j)}|}&
\left(\frac{1}{|D^{(n)}_{(i,j)}|} 
\sum_{_{\path \in D^{(n)}_{\ij}}}\left(
b^2(\path)   C_{i,j}(\path) -\mu_{i,j}^{(2)}\right),    \left( T^{(n)}_{Q, \widetilde Q}(i,j) \right)^2 -{\mu_{i,j}^{(1)}}^{2}\right) \cr
&\hspace{2cm} \xrightarrow[n\rightarrow \infty]{\P^{Q}-distr.}
\mathcal{N} \left(0, \left(\nabla h(\vec{0})\right)^T \cdot \Sigma'  \cdot  \nabla h(\vec{0})\right),
\end{align}
where 
\begin{equation}
	\nabla h(\vec{0}) = 
	J_h(\vec{0})=
	\begin{bmatrix}
		1 & 0 \\
		0 & 2\mu^{(1)}_{i,j} \\ 
	\end{bmatrix}.
\end{equation}
Noting that $\mu_{i,j}^{(1)} = \mu_{i,j}$, by Equation~(\ref{eq:MuQToMuQTilde}), the latter results in 
\begin{equation}
	\begin{split}
	\Sigma_{i,j} := \left(\nabla h(\vec{0})\right)^T \cdot \Sigma' \cdot  \nabla h(\vec{0}) =
	\begin{bmatrix}
		\Sigma'_{i,j}(1,1) & 2 \mu_{i,j} \Sigma'_{i,j}(1,2) \\
		 2 \mu_{i,j}\Sigma'_{i,j}(2,1)  &
		4 \mu_{i,j}^{2} \Sigma'_{i,j}(2,2) \\
	\end{bmatrix}.
	\end{split}
\end{equation}
Hence, by the continuous mapping theorem with the function $f(x,y)= x-y$ we get that
$$
\sqrt{|D^{(n)}_{(i,j)}|} \left(\Sigma_{Q,\widetilde Q}(i,j) - \sigma^2_{i,j}\right)
\xrightarrow[n\rightarrow \infty]{\P^{Q}-distr.}
\mathcal{N}(0, (1,-1) \cdot \Sigma'_{i,j} \cdot (1,-1)^T ),
$$
which concludes the proof. 
\end{proof}

\subsection{Unknown $Q$ and known $\widetilde Q$}
In the case where $Q$ is unknown, we have to replace the probabilities $\P^{Q}\left(\rP=\path  \middle| \rP[j]=i \right)$ by estimators. Define 
\begin{equation}\label{eq:EstimatorC(p,n)}
\widehat{C}_{i,j}(\path,n)
:=
\P^{\widetilde Q}\left(\rP=\path \middle| \rP[j]=i \right)
\left(\frac{\sum_{k=1}^{n} \ind\{\path_k = \path\} }{\sum_{k=1}^{n} \ind\{\path_k[j] = i \} }\right)^{-1}
\end{equation}
which is a random variable that converges $\P^{Q}$-a.s.\ to  
$C_{i,j}(\path)=\frac{\P^{\widetilde Q}\left(\rP=\path \middle| \rP[j]=i \right)}{\P^{Q}\left(\rP=\path \middle| \rP[j]=i \right)}$. Therefore, we propose  the following variation of $T^{(n)}_{Q, \widetilde Q}(i,j)$:  
\begin{equation}\label{eq:EstimatorT'}
\widehat{T}^{(n)}_{Q, \widetilde Q}(i,j) 
:= 
\frac{1}{|D^{(n)}_{(i,j)}|} \sum_{\path \in D^{(n)}_{\ij}} b(\path)\widehat{C}_{i,j}(\path, n).
\end{equation}
The estimator  $\Sigma^{(n)}_{Q, \widetilde Q}(i,j)$ for the variance becomes:
$$
\widehat{\Sigma}_{Q,\widetilde Q}(i,j) 
=
\left(\frac{1}{|D^{(n)}_{(i,j)}|}
\sum_{_{\path \in D^{(n)}_{\ij}}}
b^2(\path) \widehat{C}(\path,n)
\right)
-
\left(
\widehat{T}^{(n)}_{Q, \widetilde Q}(i,j)
\right)^2.
$$

	\begin{lemma}\label{lemma:ThatSigmahatConvergence}
		Let $Q,\widehat{Q}$ be two equivalent transition matrices then for all $i,j$, we have that
		\begin{equation*}
			\begin{split}
				&\widehat{T}^{(n)}_{Q, \widetilde Q}(i,j) - \mu_{i,j} \xrightarrow[n\rightarrow \infty]{\P^{Q}-a.s } 0;
				\\& \widehat{\Sigma}_{Q,\widetilde Q}(i,j)
				- \sigma^2_{i,j}
				\xrightarrow[n\rightarrow \infty]{\P^{Q}-a.s } 0,
			\end{split}
		\end{equation*}
		where 
		$ \mu_{i,j} 
		= 
		\E^{\widetilde Q }
		[ b(\rP_1) | \rP_1[j]=i ] 
		$
		and
		$
		\sigma^2_{i,j} 
		=
		\E^{{\widetilde Q}}[b(\rP)^2|\rP[j]=i]
		-
		\E^{{\widetilde Q}}[b(\rP)|\rP[j]=i]^2.
		$
	\end{lemma}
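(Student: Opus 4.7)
The plan is to show that expanding the definition of $\widehat{C}_{i,j}(\path,n)$ inside $\widehat{T}^{(n)}_{Q,\widetilde Q}(i,j)$ produces a sum of path-wise empirical averages of $b$, weighted by the (known) $\widetilde Q$-conditional probabilities. Write $N_n(\path):=\sum_{k=1}^n \ind\{\rP_k=\path\}$ for the number of times $\path$ appears among the first $n$ observations, so that $N_n(i,j):=|D^{(n)}_{\ij}|=\sum_{\path:\,\path[j]=i} N_n(\path)$. Substituting
\[
\widehat{C}_{i,j}(\rP_k,n)=\P^{\widetilde Q}\!\left(\rP=\rP_k\mid\rP[j]=i\right)\cdot\frac{N_n(i,j)}{N_n(\rP_k)}
\]
into the definition of $\widehat{T}^{(n)}_{Q,\widetilde Q}(i,j)$ makes the factors $N_n(i,j)$ cancel, and regrouping the surviving sum according to the distinct realized path yields
\[
\widehat{T}^{(n)}_{Q,\widetilde Q}(i,j) \;=\; \sum_{\path:\,\path[j]=i,\,N_n(\path)>0}\P^{\widetilde Q}\!\left(\rP=\path\mid\rP[j]=i\right)\bar{b}_n(\path),
\]
where $\bar{b}_n(\path):=N_n(\path)^{-1}\sum_{k:\rP_k=\path} b(\rP_k)$ is the empirical mean of the observed qualities along $\path$.

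Since the DAG is finite, the sum ranges over a finite set, so it suffices to take the limit term by term. For each $\path\in\supp^{Q}_{i,j}$ one has $\P^Q(\rP=\path)>0$, hence $N_n(\path)\to\infty$ $\P^Q$-a.s., and the strong law of large numbers applied to the i.i.d.\ observations of $b$ along this fixed path (which are distributed as $b(\path,S)$, since the i.i.d.\ copies of $S$ across observations are independent of $\rP$) gives $\bar{b}_n(\path)\to \E[b(\path,S)]$ a.s.; the law of $S$ does not depend on the choice of transition kernel, so this equals $\E^{\widetilde Q}[b(\rP,S)\mid\rP=\path]$. Equivalence of $Q$ and $\widetilde Q$ gives $\supp^{Q}_{i,j}=\supp^{\widetilde Q}_{i,j}$, so no paths of positive $\widetilde Q$-measure are missing from the limit, and the law of total probability yields
\[
\widehat{T}^{(n)}_{Q,\widetilde Q}(i,j)\xrightarrow[n\to\infty]{\P^Q\text{-a.s.}}\sum_{\path\in\supp^{\widetilde Q}_{i,j}}\E^{\widetilde Q}[b(\rP)\mid\rP=\path]\,\P^{\widetilde Q}(\rP=\path\mid\rP[j]=i)=\mu_{i,j}.
\]

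The variance estimator is treated by the identical algebraic manipulation with $b^2$ in place of $b$, giving
\[
\frac{1}{N_n(i,j)}\sum_{\path\in D^{(n)}_{\ij}} b^2(\path)\widehat{C}_{i,j}(\path,n)\xrightarrow[n\to\infty]{\P^Q\text{-a.s.}}\E^{\widetilde Q}[b^2(\rP)\mid\rP[j]=i],
\]
and combining this with $\bigl(\widehat{T}^{(n)}_{Q,\widetilde Q}(i,j)\bigr)^2\to\mu_{i,j}^2$ (continuous mapping theorem) yields $\widehat{\Sigma}_{Q,\widetilde Q}(i,j)\to\sigma^2_{i,j}$ $\P^Q$-a.s. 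The main obstacle is simply spotting the algebraic identity that cancels the $N_n(i,j)$ factors; once that is in place, everything reduces to a term-by-term application of the strong law over a finite index set, and no moment assumptions beyond those of Assumption~\ref{ass1} are needed (the finite fourth-moment hypothesis of Theorem~\ref{thm:Qknown} is required only for the CLT, not for a.s.\ consistency).
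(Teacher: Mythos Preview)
Your proof is correct and rests on the same core idea as the paper's: decompose the estimator over the finitely many paths through $(i,j)$ and apply the strong law of large numbers term by term, then assemble the limit via the continuous mapping theorem. The one genuine difference is in the algebra. The paper keeps the factor $\widehat{C}_{i,j}(q_\ell,n)$ intact, applies the SLLN separately to $\widehat{C}_{i,j}(q_\ell,n)\to C_{i,j}(q_\ell)$, to $n/|D^{(n)}_{\ij}|$, and to $\tfrac1n\sum_k \ind\{\path_k=q_\ell\}b(\path_k,S_k)$, and then invokes the change-of-measure Lemma~\ref{lemma:MeasureChange} to identify the limit as $\mu_{i,j}$. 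You instead expand $\widehat{C}_{i,j}$ at the outset, observe that the $N_n(i,j)$ factors cancel exactly, and obtain directly the representation $\sum_{\path}\P^{\widetilde Q}(\rP=\path\mid\rP[j]=i)\,\bar b_n(\path)$, so that the limit is identified by the law of total probability without appealing to Lemma~\ref{lemma:MeasureChange}. This is a slightly more elementary and transparent route; the paper's version, on the other hand, makes the parallel with the known-$Q$ estimator $T^{(n)}_{Q,\widetilde Q}$ (and hence with Theorem~\ref{thm:markovlln}) more visible. Your remark that only the standing second-moment assumption is needed here is also correct.
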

	\begin{proof}[\textbf{Proof}]
		Let $q_1,q_2,\ldots,q_m$ be the elements of all pairwise different paths such that $q[j]=i$ and let $D^{(n)}_{\ij} = \{ p_1,p_2,\ldots, p_n \}$, then
		\begin{equation}\label{eq:EquivalentHatT}
		\begin{split}
		\widehat{T}^{(n)}_{Q, \widetilde Q}(i,j) 
		&= 
		\frac{1}{|D^{(n)}_{(i,j)}|} \sum_{\path \in D^{(n)}_{\ij}} b(\path)\widehat{C}_{i,j}(\path, n)
		\\&=
		\frac{n}{D^{(n)}_{\ij}} 
		\sum_{l=1}^{m}
		\widehat{C}_{i,j}(q_l,n) 
		\left( \frac{1}{n} \sum_{k=1}^{n} \ind\{ p_k=q_l \} b(p_k,S_k) \right).
		\end{split}
		\end{equation}
By the strong law of large numbers, we get that
		\begin{equation}\label{eq:almostSureSplit}
			\begin{split}
				& \frac{n}{D^{(n)}_{\ij}} \xrightarrow[n\rightarrow \infty]{\P^{Q}-a.s } \left(\P^{Q}[\rP[j]=i]\right)^{-1};
				\\& \widehat{C}_{i,j}(q_l,n)
				\xrightarrow[n\rightarrow \infty]{\P^{Q}-a.s } C_{i,j}(q_l);
				\\& \frac{1}{n} \sum_{k=1}^{n} \ind\{ p_k=q_l \} b(p_k,S_k)
				\xrightarrow[n\rightarrow \infty]{\P^{Q}-a.s } \E^{Q}[\ind\{ \rP=q_l \} b(\rP,S) ].
			\end{split}
		\end{equation}
Since $m$ is a finite constant, we obtain, using the continuous mapping theorem and Lemma~\ref{lemma:MeasureChange} in the last equality, that
		\begin{equation}\label{eq:AlmostSureConvergenceTn}
			\begin{split}
				\widehat{T}^{(n)}_{Q, \widetilde Q}(i,j) 
				&\xrightarrow[n\rightarrow \infty]{\P^{Q}-a.s }
				\left(\P^{Q}[\rP[j]=i]\right)^{-1}
				\sum_{l=1}^{m}
				C_{i,j}(q_l) 
				\E^{Q}[\ind\{ \rP=q_l \} b(\rP,S) ]
				\\&=
				\left(\P^{Q}[\rP[j]=i]\right)^{-1}
				\sum_{l=1}^{m} 
				\E^{Q}[\ind\{ \rP=q_l \} b(\rP,S) C_{i,j}(\rP) ]
				\\&=
				\left(\P^{Q}[\rP[j]=i]\right)^{-1}
				\E^{Q}[\ind\{ \rP[j]=i \} b(\rP,S) C_{i,j}(\rP) ]
				\\&=
				\E^{Q}[ b(\rP,S) C_{i,j}(\rP) | \rP[j] = i ]
				\\&=
				\E^{\widetilde{Q}}[ b(\rP,S) | \rP[j] = i ]
				= \mu_{i,j}.
			\end{split}
		\end{equation}
For the second part of the statement, we obtain similarly, that
		\begin{equation}\label{eq:AlmostSureTnSquare}
			\left( \widehat{T}^{(n)}_{Q, \widetilde Q}(i,j)\right)^2
			\xrightarrow[n\rightarrow \infty]{\P^{Q}-a.s }
			\E^{\widetilde{Q}}[ b(\rP,S) | \rP[j] = i ]^2.
		\end{equation}
		Furthermore, 
		\begin{equation}
			\begin{split}
				\frac{1}{|D^{(n)}_{(i,j)}|}
				\sum_{_{\path \in D^{(n)}_{\ij}}}
				b^2(\path) \widehat{C}(\path,n)
				=
				\frac{n}{D^{(n)}_{\ij}} 
				\sum_{l=1}^{m}
				\widehat{C}(q_l,n) 
				\left( \frac{1}{n} \sum_{k=1}^{n} \ind\{ p_k=q_l \} b^2(p_k,S_k) \right)
			\end{split}
		\end{equation}
		In the same fashion as in (\ref{eq:almostSureSplit}) and (\ref{eq:AlmostSureConvergenceTn}) we get that
		\begin{equation}\label{eq:AlmostSurebSquare}
			\frac{1}{|D^{(n)}_{(i,j)}|}
			\sum_{_{\path \in D^{(n)}_{\ij}}}
			b^2(\path) \widehat{C}(\path,n)
			\xrightarrow[n\rightarrow \infty]{\P^{Q}-a.s }
			\E^{\widetilde{Q}}[ b^2(\rP,S) | \rP[j] = i ].
		\end{equation}
Hence, using (\ref{eq:AlmostSureTnSquare}) and (\ref{eq:AlmostSurebSquare}) we conclude that
		\begin{equation}
			\begin{split}
			\widehat{\Sigma}_{Q,\widetilde Q}(i,j) 
			=
			\left(\frac{1}{|D^{(n)}_{(i,j)}|}
			\sum_{_{\path \in D^{(n)}_{\ij}}}
			b^2(\path) \widehat{C}(\path,n)
			\right)
			-
			\left(
			\widehat{T}^{(n)}_{Q, \widetilde Q}(i,j)
			\right)^2
			\xrightarrow[n\rightarrow \infty]{\P^{Q}-a.s }
			\sigma^2_{i,j}.
			\end{split} 
		\end{equation}
	\end{proof}

\begin{thm}\label{thm:asymptUnknownQ}   
We have that
\begin{equation*}
	\sqrt{|D^{(n)}_{(i,j)}|}
	(\widehat{T}_{Q,\widetilde Q}^{(n)}(i,j) - \mu_{i,j})
	\xrightarrow[n \rightarrow \infty]{{\P^{Q}-}distr.}
	\mathcal{N}(0, {\vec{C} \Sigma_{i,j}  \vec{C}^T} )
\end{equation*}
where $\mu_{i,j} = \E^{\widetilde Q }[ b(\rP) | \rP[j]=i ]$ and 
{\begin{equation*}
	\Sigma_{i,j}(\ell,\ell')=
	\begin{cases}
	\P^Q\left[ \rP=\vec{q}_\ell \big| \rP[j]=i \right]
	\V^Q\left[ b\left( \vec{q}_\ell , S_{\rP}\right) \big| \rP[j] = i \right], & \ell=\ell';\\
	0, & \mbox{otherwise,}
	\end{cases}
\end{equation*}}
for $\ell,\ell' \in \{1,\ldots, m \}$ and $\vec{C} =  \left(  C_{i,j}(\vec{q}_1), \ldots,  C_{i,j}(\vec{q}_m) \right)$ and $C_{i,j}(\path)= \frac{\P^{\widetilde Q}\left(\rP=\path \middle| \rP[j]=i \right)}{\P^{Q}\left(\rP=\path \middle| \rP[j]=i \right)}$. If we assume that the entries of $S$ have finite forth moments, we have that
\begin{equation*}
\sqrt{|D^{(n)}_{(i,j)}|} \left(\widehat{\Sigma}^{(n)}_{Q,\widetilde Q}(i,j) -\sigma^2_{i,j}\right)
\xrightarrow[n\rightarrow \infty]{\P^{Q}-distr.}
\mathcal{N}(0, {\vec{C}' \Sigma_{i,j}' \vec{C}'^T }),
\end{equation*}
where $\sigma^2_{i,j} 
=
\E^{\widetilde Q}[b(\rP)^2|\rP[j]=i]
-
\E^{\widetilde Q}[b(\rP)|\rP[j]=i]^2$ and 
{\begin{equation}
	\Sigma'_{i,j}(\ell,\ell')=
	\begin{cases}
	\P^Q\left[ \rP = \vec{q}_\ell \big| \rP[j]=i \right] \V^Q\left[ b(\vec{q}_\ell) \big| \rP[j]=i \right], & \ell=\ell' \in \{1,\ldots,m\}; \cr
	\P^Q\left[ \rP = \vec{q}_\ell \big| \rP[j]=i \right] \V^Q\left[ b^2(\vec{q}_l) \big| \rP[j]=i \right], & \ell=\ell' \in \{m+1,\ldots, 2m \}; \cr
	\P^Q\left[ \rP = \vec{q}_\ell \big| \rP[j]=i \right] \mathrm{Cov}^Q\left[ b^2(\vec{q_\ell}), b(\vec{q_\ell}) \big| \rP[j]=i \right], & \ell\in\{1,\ldots m\}, \ell`= \ell+m; \cr
	\P^Q\left[ \rP = \vec{q}_\ell' \big| \rP[j]=i \right] \mathrm{Cov}^Q\left[ b^2(\vec{q_\ell'}), b(\vec{q_\ell'}) \big| \rP[j]=i \right], & \ell'\in\{1,\ldots m\}, \ell= \ell'+m; \cr
	0, & \mbox{otherwise,}
	\end{cases},
\end{equation}}
for $\ell, {\ell'} \in \{1,\ldots,2m\}$ and $\vec{C}' = \left( 2 \mu_{i,j} C_{i,j}(\vec{q}_{1}), \ldots,2 \mu_{i,j} C_{i,j}(\vec{q}_{m}), C_{i,j}(\vec{q}_{1}), \ldots, C_{i,j}(\vec{q}_{m}  ) \right)$.
\end{thm}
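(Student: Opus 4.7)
The plan is to exploit the fact that the empirical Radon--Nikodym factor $\widehat C_{i,j}(\path,n)$ in \eqref{eq:EstimatorC(p,n)} is itself a ratio of empirical frequencies, and that this ratio cancels against the sum over $\path \in D^{(n)}_{\ij}$, reducing both $\widehat T^{(n)}_{Q,\widetilde Q}(i,j)$ and $\widehat \Sigma^{(n)}_{Q,\widetilde Q}(i,j)$ to \emph{fixed} linear combinations of conditional sample means indexed by the finitely many paths through $(i,j)$. Once this reduction is in hand, a coordinate-wise application of Anscombe's theorem together with a delta method (analogous to the known-$Q$ case of Theorem \ref{thm:Qknown}) delivers both asymptotic statements.

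Concretely, I would first enumerate the distinct paths $\vec q_1,\ldots,\vec q_m$ of $\supp^Q$ with $\vec q_\ell[j]=i$ and, using the rearrangement already carried out in \eqref{eq:EquivalentHatT}, rewrite
\[
\widehat T^{(n)}_{Q,\widetilde Q}(i,j) \;=\; \sum_{\ell=1}^m c_\ell\,\bar b_\ell^{(n)},
\qquad c_\ell := \P^{\widetilde Q}(\rP=\vec q_\ell\mid \rP[j]=i),
\qquad
\bar b_\ell^{(n)} := \frac{\sum_{k=1}^n \ind\{\rP_k=\vec q_\ell\}\,b(\rP_k,S_k)}{\sum_{k=1}^n \ind\{\rP_k=\vec q_\ell\}},
\]
and analogously $\widehat\Sigma^{(n)}_{Q,\widetilde Q}(i,j) = \sum_\ell c_\ell \overline{b^2}_\ell^{(n)} - \bigl(\sum_\ell c_\ell \bar b_\ell^{(n)}\bigr)^2$. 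The key structural observation is that the subsamples $\{k:\rP_k=\vec q_\ell\}$ are pairwise disjoint and the $S_k$'s are independent, so the empirical means $\bar b_1^{(n)},\ldots,\bar b_m^{(n)}$ (and similarly the $\overline{b^2}_\ell^{(n)}$'s) are \emph{unconditionally} independent across $\ell$.

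Next, I would apply Anscombe's theorem coordinate-wise: each $\bar b_\ell^{(n)}$ is a sample mean of $n N_\ell^{(n)}\to\infty$ i.i.d.\ copies of $b(\vec q_\ell,S)$, with $N_\ell^{(n)} := n^{-1}\sum_k \ind\{\rP_k=\vec q_\ell\}\to\P^Q(\rP=\vec q_\ell)>0$ almost surely (positivity by the equivalence of $Q$ and $\widetilde Q$). Independence across $\ell$ upgrades the marginal convergences into a joint one:
\[
\sqrt n\bigl(\bar b_\ell^{(n)}-\mu_\ell\bigr)_{\ell=1}^m \;\xrightarrow[n\rightarrow\infty]{\P^Q-distr.}\; \normal(0,\Lambda),
\qquad \Lambda = \mathrm{diag}\!\left(\frac{\V^Q[b(\vec q_\ell,S)]}{\P^Q(\rP=\vec q_\ell)}\right)_{\ell=1}^m,
\]
with $\mu_\ell := \E^Q[b(\vec q_\ell,S)]$. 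Combining with $|D^{(n)}_{\ij}|/n \to \P^Q(\rP[j]=i)$ (strong law), with $\sum_\ell c_\ell \mu_\ell = \mu_{i,j}$ (Lemma \ref{lemma:MeasureChange}), and with Slutsky applied to the linear functional $\vec c\,\cdot$ yields the first asymptotic statement. The algebraic identity $c_\ell^2/\P^Q(\rP=\vec q_\ell) = C_{i,j}(\vec q_\ell)^2\,\P^Q(\rP=\vec q_\ell\mid\rP[j]=i)/\P^Q(\rP[j]=i)$ then puts the limiting variance into the stated form $\vec C\,\Sigma_{i,j}\,\vec C^T$. For the variance estimator, I would repeat the argument on the joint $2m$-dimensional vector $(\bar b_\ell^{(n)},\overline{b^2}_\ell^{(n)})_\ell$: the same disjoint-subsample structure forces a block-diagonal limit covariance, with a $2\times 2$ block per $\ell$ capturing $\V^Q[b(\vec q_\ell,S)]$, $\V^Q[b(\vec q_\ell,S)^2]$ and $\mathrm{Cov}^Q[b(\vec q_\ell,S),b(\vec q_\ell,S)^2]$, each scaled by $1/\P^Q(\rP=\vec q_\ell)$. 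Then Anscombe's delta method (Theorem \ref{thm:AscombesDeltaMethod}) applied to $h(\vec x,\vec y) = \sum_\ell c_\ell y_\ell - (\sum_\ell c_\ell x_\ell)^2$, whose gradient at $(\mu_\ell,\mu_\ell^{(2)})_\ell$ has entries $-2\mu_{i,j}c_\ell$ and $c_\ell$, produces exactly the vector $\vec C'$ and delivers the limiting variance $\vec C'\,\Sigma'_{i,j}\,\vec C'^T$.

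The hard part will be purely bookkeeping: aligning the $2m\times 2m$ block structure of the limit with the ordering convention of $\Sigma'_{i,j}$ in the statement (first $m$ coordinates for $b$, last $m$ for $b^2$, with the factors $2\mu_{i,j}$ emerging from the Jacobian of $h$). The only analytic subtlety is checking that $\widehat C_{i,j}(\vec q_\ell,n)$ is well-defined and almost surely bounded for large $n$, which follows immediately from $N_\ell^{(n)}\to\P^Q(\rP=\vec q_\ell)>0$ under the assumed equivalence of $Q$ and $\widetilde Q$.
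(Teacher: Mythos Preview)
Your plan rests on the same key cancellation as the paper's proof: $\widehat C_{i,j}(\vec q_\ell,n)\cdot |D^{(n)}_\ell|/|D^{(n)}_{(i,j)}|=c_\ell:=\P^{\widetilde Q}(\rP=\vec q_\ell\mid\rP[j]=i)$, which collapses $\widehat T^{(n)}_{Q,\widetilde Q}(i,j)$ to $\sum_\ell c_\ell\,\bar b_\ell^{(n)}$ with \emph{deterministic} weights. The paper makes the same cancellation (the passage leading to Equation~\eqref{eq:TnUltimateTransformation}) but keeps the random factors $\widehat C_{i,j}(\vec q_\ell,n)$ in front and handles them via Slutsky, applying the multidimensional Anscombe Theorem~\ref{thm:MultiDimensionalAnscombe} with the \emph{single} random time $N(n)=|D^{(n)}_{(i,j)}|$ to the i.i.d.\ vectors $Y^{(k)}=\bigl(\ind\{\rP_k=\vec q_\ell\}(b(\vec q_\ell,S_k)-\E[b(\vec q_\ell,S)])\bigr)_{\ell}$ under $\P^Q(\,\cdot\mid\rP[j]=i)$; the diagonal form of $\Sigma_{i,j}$ then drops out of $\ind\{\rP_k=\vec q_\ell\}\ind\{\rP_k=\vec q_{\ell'}\}=0$ for $\ell\ne\ell'$. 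Your variant, with $m$ separate random times $N_\ell^{(n)}$ and deterministic $c_\ell$, is a bit cleaner on the bookkeeping side.

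There is, however, a slip in your justification of the joint CLT: the $\bar b_\ell^{(n)}$ are \emph{not} unconditionally independent across $\ell$. They are conditionally independent given the path sequence $(\rP_k)_k$, but the random sample sizes $N_\ell^{(n)}$ are dependent (components of a multinomial), and that dependence leaks through; e.g.\ in the centered case $\E\bigl[(\bar b_\ell^{(n)})^2(\bar b_{\ell'}^{(n)})^2\bigr]=\sigma_\ell^2\sigma_{\ell'}^2\,\E[N_\ell^{-1}N_{\ell'}^{-1}]$, which does not factor. So ``marginal Anscombe $+$ independence $\Rightarrow$ joint'' does not go through as written. The fix is easy: either revert to the paper's packaging into a single i.i.d.\ vector sequence with one random time (precisely what Theorem~\ref{thm:MultiDimensionalAnscombe} is built for), or run a Cram\'er--Wold argument using the \emph{conditional} independence together with the observation that the conditional limiting law does not depend on the realised path sequence. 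With that adjustment, your identification of $\vec C\,\Sigma_{i,j}\,\vec C^T$ and the delta-method step for $\vec C'\,\Sigma'_{i,j}\,\vec C'^{T}$ are correct and match the paper's argument.
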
 

\begin{proof}[\textbf{Proof}]
The key result to get the asymptotic distribution in this case is the multi-dimensional Anscombe's theorem,   Theorem~\ref{thm:MultiDimensionalAnscombe} in Appendix \ref{sec:appendix}. In order to apply this result we do some preparations. Using Lemma~\ref{lemma:MeasureChange}, we write
\begin{align*}
\mu_{i,j} &= \E^{\widetilde Q }[ b(\rP) | \rP[j]=i ] = \E^{ Q }[ b(\rP)  C_{i,j}(\rP)| \rP[j]=i ], 
\end{align*}
with $ C_{i,j}(\rP)= \frac{\P^{\widetilde Q}\left(\rP'=\rP \middle| \rP'[j]=i \right)}{\P^{Q}\left(\rP'=\rP \middle| \rP'[j]=i \right)}.$ Let $\vec{q}_1, \vec{q}_2, \ldots, \vec{q}_m$ be all the pairwise different paths such that {$\mbox{Supp}^Q_{i,j}.$}  Notice that in fact $b(\rP)=b(\rP, S)$ and that $\rP$ and $S$ are drawn from a product measure. Hence, it makes sense to write also $b( \vec{q})=b(\vec{q}, S)$ for a given path $\vec{q}$. For any realization $(\vec{p}_{n}, S_{n})$ of $\P^{Q}$ or $\P^{\widetilde{Q}}$  we also write $b(\vec{p}_{n})=b(\vec{p}_{n}, S_{n})$.
We can decompose $\mu_{i,j}$ using all possible paths:
\begin{equation}\label{eq:muDecomposition}
\mu_{i,j} 
= 
\sum_{\ell=1}^{m} \E^{Q} [b(\vec{q}_{\ell}) C_{i,j}(\vec{q}_{\ell}) \ind\{ \rP=\vec{q}_{\ell}\} | \rP[j]=i ].
\end{equation}

	\noindent Let us write $\tilde \mu_{\ell}=   \E^{Q} [b(\vec{q}_{\ell}) C_{i,j}(\vec{q}_{\ell}) \ind\{\rP=\vec{q}_{\ell}\} | \rP[j]=i ]$ and  $\mu_{\ell}= \E^{Q} [b(\vec{q}_{\ell}) C_{i,j}(\vec{q}_{\ell})  | \rP[j]=i ]$, and consider
	\begin{align*}
	\widehat{T}^{(n)}_{Q, \widetilde Q}(i,j) -\mu_{i,j} 
	&= \left( \frac{1}{|D^{(n)}_{(i,j)}|} \sum_{\path \in D^{(n)}_{\ij}} b(\path)\widehat{C}_{i,j}(\path, n) \right)- \mu_{i,j} \cr
	&=  \frac{1}{|D^{(n)}_{(i,j)}|}  \sum_{\ell=1}^{m} \widehat{C}_{i,j}(\vec{q}_l, n) \left( \sum_{\path \in D^{(n)}_{\ij}}  \ind\{\vec{p}=\vec{q}_{\ell}\}  b(\path) \right) -  \sum_{\ell=1}^{m} \tilde\mu_{\ell} \cr
	&=  \frac{1}{|D^{(n)}_{(i,j)}|}  \sum_{\ell=1}^{m} \widehat{C}_{i,j}(\vec{q}_\ell, n) \left( \sum_{\path \in D^{(n)}_{\ij}}  \ind\{\vec{p}=\vec{q}_\ell \}  \left(b(\vec{q}_\ell) - \E^Q[b(q_\ell)|\rP[j]=i] \right) \right) \cr
	& +  \sum_{\ell=1}^{m} \left(\widehat{C}_{i,j}(\vec{q}_\ell, n) \frac{|D^{(n)}_\ell|}{|D^{(n)}_{(i,j)}|} \E^Q[b(q_\ell)|\rP[j]=i]\right)
	-  \sum_{\ell=1}^{m} \tilde\mu_{\ell} 
	\end{align*}
	with $|D^{(n)}_{\ell}|=\sum_{k=1}^{n} \ind \{\vec{p}_k = \vec{q}_{\ell}\}$ for $\ell\in \{1,\ldots, m\}$. By independence,
	we deduce for every $\ell\in\{1,\ldots, m\}$ that
	\begin{equation}
	\begin{split}
	\tilde \mu_{\ell}
	&=   
	\E^{Q} [b(\vec{q}_{\ell}) C_{i,j}(\vec{q}_{\ell}) \ind\{\rP=\vec{q}_{\ell}\} | \rP[j]=i ] 
	\\&= 
	\E^{Q} [b(\vec{q}_{\ell})   | \rP[j]=i ] \cdot C_{i,j}(\vec{q}_{\ell})
	\cdot
	\P^Q [ \rP=\vec{q}_{\ell}  | \rP[j]=i ]
	\\&=
	\E^{Q} [b(\vec{q}_{\ell})  | \rP[j]=i ] \cdot
	\P^{\tilde{Q}} [ \rP=\vec{q}_{\ell}  | \rP[j]=i ].
	\end{split}
\end{equation}
Moreover, recalling that  $\widehat{C}_{i,j}(\vec{q}_\ell,n)
	:=
	\P^{\widetilde Q}\left(\rP=\vec{q}_\ell \middle| \rP[j]=i \right)
	\frac{ |D^{(n)}_{\ij}| }{ |D^{(n)}_\ell| }$, it follows from the previous equality that 
	\begin{equation}
		\widehat{C}_{i,j}(\vec{q}_\ell, n) \frac{|D^{(n)}_\ell|}{|D^{(n)}_{(i,j)}|} \E^Q[b(q_\ell)|\rP[j]=i]
		=
		\P^{\widetilde Q} \left[\rP=\vec{q}_\ell \middle| \rP[j]=i \right]
		\E^Q[b(q_\ell)|\rP[j]=i]
		= 
		\tilde{\mu}_\ell
	\end{equation}
for every $\ell\in\{1,\ldots, m\}$, implying that
	\begin{equation}\label{eq:TnUltimateTransformation}
		\widehat{T}^{(n)}_{Q, \widetilde Q}(i,j) -\mu_{i,j} 
		= 
		 \sum_{\ell=1}^{m} \widehat{C}_{i,j}(\vec{q}_\ell, n) \left(
		 \frac{1}{|D^{(n)}_{(i,j)}|} 
		 \sum_{\path \in D^{(n)}_{\ij}}  \ind\{\vec{p}=\vec{q}_\ell \}  \left(b(q_\ell) - \E^Q[b(q_\ell)|\rP[j]=i] \right) \right).
	\end{equation}
	Since $\ind\{\rP=\vec{q}_\ell \}$ and $ b(q_\ell)=b(q_\ell, S) $ are independent we also have that
	\begin{equation}
		\begin{split}
			\E^Q\left[\ind\{\rP=\vec{q}_\ell \}   \E^Q \left[b(q_\ell)|\rP[j]=i\right] \big| \rP[j]= i \right]
			&=
			\E^Q \left[b(q_\ell)|\rP[j]=i\right]
			\E^Q\left[\ind\{\rP=\vec{q}_\ell \}   \big| \rP[j]= i \right] 
			\\&=
			\E^Q \left[b(q_\ell)|\rP[j]=i\right]
			\E^Q\left[\ind\{\rP=\vec{q}_\ell \}   \big| \rP[j]= i \right]
			\\& = 
			\E^Q \left[\ind\{\rP=\vec{q}_\ell \} b(q_\ell)|\rP[j]=i\right]
		\end{split}
	\end{equation}
Hence,
	\begin{equation}\label{eq:zeroExpTransformOfTn}
		\E^Q\left[\ind\{\rP=\vec{q}_\ell \}   \left(b(q_\ell)  
		- \E^Q \left[b(q_\ell)|\rP[j]=i\right]  \right) \big|\rP[j]=i\right] = 0, \quad \mbox{for all } \ell= 1,\ldots, m.
	\end{equation}
We now define the random variables $Y$ in order to apply Theorem~\ref{thm:MultiDimensionalAnscombe}. In the following we assume all random variables to be distributed under $\P^{Q}$ conditioned on $\rP[j]=i$. Define for $k\in\{1,\ldots, n\}$,
	$$
	\Xi^{(k)} = \sum_{\ell=1}^{m} \ell \cdot \ind\{\rP_{k}=\vec{q}_{\ell}\}
	$$
	the variable indicating which path passing through $(i,j)$ is chosen. 
	For $\ell\in\{1,\ldots, m\}$ define 	
	\begin{equation}\label{eq:Y1tom}
		Y^{(k)}_{\ell} = \ind\{\Xi^{(k)}=\ell\} \left(b(q_\ell) - \E^Q[b(q_\ell)|\rP[j]=i]\right)
	\end{equation}
	and 
	$$
	Y^{(k)}=\left(Y^{(k)}_{1},\ldots, Y^{(k)}_{m} \right)
	$$
	which by Equation~\ref{eq:zeroExpTransformOfTn} satisfies
	$$
		\E^Q\left[ Y^{(k)} | \rP[j] = i \right] = \vec{0}.
	$$	
	Furthermore, we define
	$$
	N(n)=|D^{(n)}_{(i,j)}|.
	$$
	Then  we have that, 
	$$\frac{N(n)}n \xrightarrow[{n\rightarrow \infty}]{\P^{Q}-a.s.}\theta>0. $$ 
	We define  for $\ell\in\{1,\ldots,m\}$
	$$
	S^{(\ell)}_{N(n)} 
	= 
	\sum_{k=1}^{N(n)} Y^{(k)}_\ell.$$
Then, by Theorem~\ref{thm:MultiDimensionalAnscombe}, we get that
	\begin{equation}\label{eq:ClonclusionAnscombeOnTQ,Q'WithQUnknown}
	S_{N(n)}
	:=\frac1{\sqrt{N(n)}}
	\left(S^{(1)}_{N(n))}, \ldots, S^{(m)}_{N(n))} \right)\xrightarrow[{n\rightarrow \infty}]{\P^{Q}-distr.}
	\mathcal{N}(0, \Sigma)
\end{equation}
	where 
	\begin{equation}
	\Sigma (\ell,\ell')=
	\begin{cases}
		\P^Q\left[ \rP=\vec{q}_\ell \big| \rP[j]=i \right]
		\V^Q\left[ b\left( \vec{q}_\ell \right) \big| \rP[j] = i \right] & \ell=\ell';\\
		0 & \mbox{otherwise.}
	\end{cases}
	\end{equation} 
We also know that  $ \widehat{C}_{i,j}(\vec{q}_\ell,n)$ converges almost surely to the constant $C(\vec{q}_\ell)$ for every $\ell=1,\ldots,m$ and  therefore,
	\begin{equation}\label{eq:ChatAlmostSureConvergence}
			\left(  \widehat{C}_{i,j}(\vec{q}_1,n), \ldots,  \widehat{C}_{i,j}(\vec{q}_m,n) \right)
			\xrightarrow[n\rightarrow \infty]{\P^Q-a.s} 
			\vec{C} =
			\left(  C_{i,j}(\vec{q}_1), \ldots,  C_{i,j}(\vec{q}_m) \right).
	\end{equation}

Hence, by the multidimensional Slutsky's Theorem (Lemma 2.8 in \cite{van2000asymptotic}), the  continuous mapping theorem,  and the fact that a linear transformation of a gaussian vector is again gaussian we have by Equation~(\ref{eq:TnUltimateTransformation}) that
	\begin{equation}
	\frac{1}{\sqrt{|D^{(n)}_{\ij}|}} \left(\widehat{T}^{(n)}_{Q, \widetilde Q}(i,j) -\mu_{i,j}\right) 
	\xrightarrow[n \rightarrow \infty ]{\P-distr.}
	\mathcal{N}\left( 0, \vec{C} \cdot \Sigma \cdot \vec{C}^T \right),
	\end{equation}
which concludes the first part of the proof.
	
For the second estimator we proceed in a similar way. As in Equation~(\ref{eq:muDecomposition}), we get that
	\begin{equation}\label{eq:TildeQPartition}
		\E^{\tilde{Q}} \left[ b^2(\rP) \big| \rP[j]=i \right]
		=
		\sum_{l=1}^{m}
		\P^{\tilde{Q}}\left[ \rP = q_\ell | \rP[j]=i \right] \E^Q\left[ b^2(q_\ell) \big| \rP[j]=i \right],
	\end{equation}
	and we also see that
	\begin{equation}
		\begin{split}
			\frac{1}{|D^{(n)}_{i,j}|}
			\sum_{\path \in D^{(n)}_{i,j} }
			b^2(\path) \hat{C}(\path,n) 
			=
			&\sum_{l=1}^{m}
			\hat{C}(q_\ell,n)
			\left(\frac{1}{|D^{(n)}_{i,j}|}
			\sum_{\path \in D^{(n)}_{i,j} }
			 \ind\{\path = q_\ell \} \left( b^2(q_\ell) - \E^{Q}\left[ b^2(q_\ell) \big| \rP[j]=i \right] \right) \right)
			 \\&+
			 \sum_{l=1}^{m} \frac{|D^{(n)}_l|}{|D^{(n)}_{i,j}|}
			 \hat{C}(q_\ell,n) \E^Q\left[ b^2\left( q_\ell \right) \big| \rP[j]=i \right]
		\end{split}  
	\end{equation}
	where
	\begin{equation}\label{eq:SimplificationChat}
		\frac{|D^{(n)}_l|}{|D^{(n)}_{i,j}|}
		\hat{C}(q_\ell,n) \E^Q\left[ b^2\left( q_\ell \right) \big| \rP[j]=i \right] 
		=
		\P^{\tilde{Q}}\left[ \rP = q_\ell | \rP[j]=i \right] \E^Q\left[ b^2(q_\ell) \big| \rP[j]=i \right].
	\end{equation}
	Equations~(\ref{eq:TildeQPartition}) and (\ref{eq:SimplificationChat}) imply that
	\begin{equation}
		\begin{split}
		&\frac{1}{|D^{(n)}_{i,j}|}
		\sum_{\path \in D^{(n)}_{i,j} }
		\left(	b^2(\path) \hat{C}(\path,n)\right) 
		- 
		\E^{\tilde{Q}} \left[ b^2(\rP) \big| \rP[j]=i \right]
		=
		\\& \sum_{l=1}^{m}
		\hat{C}(q_\ell,n)
		\left(\frac{1}{|D^{(n)}_{i,j}|}
		\sum_{\path \in D^{(n)}_{i,j} }
		\ind\{\path = q_\ell \} \left( b^2(q_\ell) - \E^{Q}\left[ b^2(q_\ell,S_{\rP}) \big| \rP[j]=i \right] \right) \right)
		\end{split}
	\end{equation}
	which implies that
	\begin{equation}\label{eq:SigmahatUltimateTransformation}
		\begin{split}
			&\Sigma^{(n)}_{Q,\tilde{Q}}(i,j) - \sigma^{2}_{i,j}=
			\frac{1}{|D^{(n)}_{i,j}|}
			\sum_{\path \in D^{(n)}_{i,j} }
			\left(	b^2(\path) \hat{C}(\path,n)\right) 
			- 
			\E^{\tilde{Q}} \left[ b^2(\rP) \big| \rP[j]=i \right]
			+ 
			\left(\widehat{T}^{(n)}_{Q, \widetilde Q}(i,j)\right)^2 -\mu^2_{i,j}
			\\& =
			\sum_{l=1}^{m}
			\hat{C}(q_\ell,n)
			\left(\frac{1}{|D^{(n)}_{i,j}|}
			\sum_{\path \in D^{(n)}_{i,j} }
			\ind\{\path = q_\ell \} \left( b^2(q_\ell) - \E^{Q}\left[ b^2(q_\ell,) \big| \rP[j]=i \right] \right) \right)
			\\& + 
			\left(\sum_{\ell=1}^{m} \widehat{C}_{i,j}(\vec{q}_\ell, n) \left(
			\frac{1}{|D^{(n)}_{(i,j)}|} 
			\sum_{\path \in D^{(n)}_{\ij}}  \ind\{\vec{p}=\vec{q}_\ell \}  \left(b(q_\ell) - \E^Q[b(q_\ell)|\rP[j]=i] \right) \right)\right) \left( \widehat{T}^{(n)}_{Q, \widetilde Q}(i,j) + \mu_{i,j} \right)
		\end{split}
	\end{equation}
	where we use Equation~(\ref{eq:TnUltimateTransformation}) in the {second line of the} last equality. We are now almost ready to apply Theorem~\ref{thm:MultiDimensionalAnscombe}. In the following we assume all random variables to be distributed under $\P^{Q}$ conditioned on $\rP[j]=i$. For $\ell\in\{1,\ldots, m\}$ define 
	\begin{equation}
	W^{(k)}_{\ell} = \ind\{\Xi^{(k)}=\ell\} \left(b(q_\ell) - \E^Q[b(q_\ell)|\rP[j]=i]\right)
	\end{equation}
	where by Equation~\ref{eq:zeroExpTransformOfTn}, we have for all $\ell=1,\ldots,m$ that
	$$\E^Q\left[ W^{(k)}_\ell | \rP[j] = i \right] = 0.$$
	Furthermore, for $l=m+1,\ldots,2m$, define
	$$
	W^{(k)}_{\ell} = \ind\{\Xi^{(k)}=\ell-m\} \left( b^2(q_{\ell-m}) - \E^{Q}\left[ b^2(q_{\ell-m}) \big| \rP[j]=i \right] \right)
	$$ 
	where for all $\ell=m+1,\ldots,2m$ we get that
	$$
		\E^Q\left[ W^{(k)}_{\ell} \right]
		=
		\E^Q\left[ \ind\{\rP = q_{\ell-m}\}  b^2(q_{\ell-m}) \big| \rP[j]=i\right] - \E^{Q}\left[ b^2(q_{\ell-m}) \big| \rP[j]=i \right] \E^Q\left[\ind\{\rP = q_{\ell-m} \} \big| \rP[j]=i \right] = 0.
	$$
	We then define
	$$
		W^{(k)} = \left( W^{(k)}_{1},\ldots,W^{(k)}_{2m} \right)
	$$
	which by the previous calculations satisfies for all $k=1,\ldots,n$ that
	$$
		\E^Q\left[ W^{(k)} \big| \rP[j]=i \right] = \vec{0}.
	$$
We define  for $\ell\in\{1,\ldots,2m\}$
	$$
	S'^{(\ell)}_{N(n)} 
	= 
	\sum_{k=1}^{N(n)} W^{(k)}_\ell,$$
where  $ N(n)=|D^{(n)}_{(i,j)}|.$ 	
It follows from Theorem~\ref{thm:MultiDimensionalAnscombe} that
	\begin{equation}\label{eq:ClonclusionAnscombeOnTQ,Q'WithQUnknown}
	S'_{N(n)}
	:=\frac1{\sqrt{N(n)}}
	\left(S^{(1)'}_{N(n))}, \ldots, S^{(2m)'}_{N(n))} \right)\xrightarrow[{n\rightarrow \infty}]{\P^{Q}-distr.}
	\mathcal{N}(0, \Sigma'')
	=
	\left(S'_1,\ldots,S'_{2m}\right)
	\end{equation}
	where the covariance matrix is given by{\begin{equation}
	\Sigma'_{i,j}(\ell,\ell')=
	\begin{cases}
	\P^Q\left[ \rP = \vec{q}_\ell \big| \rP[j]=i \right] \V^Q\left[ b(\vec{q}_\ell) \big| \rP[j]=i \right], & \ell=\ell' \in \{1,\ldots,m\}; \cr
	\P^Q\left[ \rP = \vec{q}_\ell \big| \rP[j]=i \right] \V^Q\left[ b^2(\vec{q}_l) \big| \rP[j]=i \right], & \ell=\ell' \in \{m+1,\ldots, 2m \}; \cr
	\P^Q\left[ \rP = \vec{q}_\ell \big| \rP[j]=i \right] \mathrm{Cov}^Q\left[ b^2(\vec{q_\ell}), b(\vec{q_\ell}) \big| \rP[j]=i \right], & \ell\in\{1,\ldots m\}, \ell`= \ell+m; \cr
	\P^Q\left[ \rP = \vec{q}_{\ell'} \big| \rP[j]=i \right] \mathrm{Cov}^Q\left[ b^2(\vec{q}_{\ell'}), b(\vec{q}_{\ell'}) \big| \rP[j]=i \right], & \ell'\in\{1,\ldots m\}, \ell= \ell'+m; \cr
	0, & \mbox{otherwise.}
	\end{cases},
\end{equation}}
We also know that the vector 
	$$
		\vec{C}^{(n)'}=
		\left( \hat{C}_{Q,\tilde{Q}}(q_1,n), \ldots, \hat{C}_{Q,\tilde{Q}}(q_{m},n), \widehat{T}^{(n)}_{Q, \widetilde Q}(i,j) + \mu_{i,j} \right)
		\xrightarrow[n\rightarrow \infty]{\P^Q-a.s} 
		\vec{C}' :=
		\left(  C_{i,j}(\vec{q}_1), \ldots,  C_{i,j}(\vec{q}_m), 2\mu_{i,j} \right).
	$$
The remaining part of the proof is similar to the end of the first part of the proof.
\end{proof}
\section{Examples}\label{sec:examples}
The first example illustrates how we can unbias estimates if the covariates are dependent; we use simulation data to compare the estimates with the ``real'' values.\footnote{An implementation of the estimators in the language R together with the following examples can be found at \url{https://github.com/NaitsabesMue/MarkovLinearModel}} .
\begin{ex}[Unbiasing]\label{ex1}
Let us consider a DAG with $2$ lines and $2$ columns, and a random matrix $S$ with first moments
\begin{equation*}
\E[S]=\left(\begin{array}{cc}
0 & 1 \cr
 -2 & 2\end{array}  \right)
\end{equation*}
The transition $Q$ of the Markov chain are given as
\begin{equation*}
q^{(1)}(s, 1) = q^{(1)}(s, 2)=1/2,
\end{equation*}
\begin{equation*}
q^{(2)}( 1, 1)= 3/4,~q^{(2)} (1,2)=1/4,~q^{(2)}(2,1)= 1/4, \mbox{ and } q^{(2)}(2,2)= 3/4.
\end{equation*}
We calculate
\begin{equation*}
\E^{Q}[b(\path) | \path[1]=1] = \frac34 \cdot 1 + \frac14 \cdot 2 = \frac54
\end{equation*}
\begin{equation*}
 \E^{Q}[b(\path) | \path[1]=2]= \frac14 \cdot (-1) + \frac34 \cdot 0 = -\frac14 
\end{equation*}
and 
\begin{equation*}
\E^{Q}[b(\path) | \path[2]=1] = \frac34 \cdot 1 + \frac14 \cdot (-1) = \frac12 = \frac14 \cdot 2 + \frac34 \cdot 0 = \E^{Q}[b(\path) | \path[2]=2]
\end{equation*}
This implies that the biased estimators $T^{(n)}\ij$ won't be able to detect the difference in quality in the second column and would underestimate the difference in the first column.  However, Corollary \ref{cor:markovuniform} ensures that
\begin{equation*}
T^{(n)}_{Q, U}(i,j) = \frac{ 1}{2 |D^{(n)}_{(i,j)}|}  \sum_{\path \in D^{(n)}_{\ij}} \left( b(\path) \frac{ \P^{Q}\left( \rP'[j]=i \right)}{\P^{Q}\left(\rP'=\path \right)}\right)
\end{equation*}
allows to detect the differences in the second column. If we write $\path_{1}=(s, 1,1), (1,2), r),  \path_{2}=(s, (1,1), (2,2), r), \path_{3}=(s, (2,1), (1,2), r)$ and $\path_{4}=(s, (1,1), (2,2), r)$ the estimators for the second column become:



 \begin{equation*}
T^{(n)}_{Q, U}(1,2) = \frac{1}{2|D^{(n)}_{(1,2)}|} \frac12 \sum_{\path \in D^{(n)}_{(1,2)}}  \left(\frac83  b(\path) \ind\{\path=\path_{1}\}  +  8 b(\path) \ind\{\path=\path_{3}\}      \right),
\end{equation*}

\begin{equation*}
T^{(n)}_{Q, U}(2,2) = \frac{1}{2|D^{(n)}_{(2,2)}|} \frac12 \sum_{\path \in D^{(n)}_{(2,2)}}  \left(8  b(\path) \ind\{\path=\path_{2}\}  +  \frac83 b(\path) \ind\{\path=\path_{4}\}      \right).
\end{equation*}
Let us consider some data from simulations. We suppose $S$ to be a matrix with independent Gaussian entries. The means are given by $\E[S]$ above, and the variances are chosen as follows 
\begin{equation}
\V(S)=\left(\begin{array}{cc}
2 & 1 \cr
 1 & 1\end{array}  \right).
\end{equation}
We perform $n=1000$ simulation runs and compare the estimations under the assumption of uniform transition, known transitions, and estimated transitions in Tables \ref{tab:ex1E} and  \ref{tab:ex1V}. The asymptotic distributions of the estimators also allow constructing confidence intervals or tests for the difference.
\begin{table}[ht]
\begin{center}
\begin{tabular}{|c|c||c|c|c|}
\hline
 & $\E[S(1,j)] -\E[S(2,j)] $ & $T^{(n)}(1,j) - T^{(n)}(2,j)$&  $T^{(n)}_{Q, U}(1,j) - T^{(n)}_{Q, U}(2,j)$  & $\widehat{T}^{(n)}_{Q, U}(1,j) - \widehat{T}^{(n)}_{Q, U}(2,j)$ \cr
\hline 
$j=1$ & $2$ & $1.54$ &  $2.15$  & $2.01$ \cr
\hline
$j=2$ & $-1$ & $ -0.13$  & $-1.21$ & $-1.04$  \cr
\hline
\end{tabular}
\end{center}
\caption{Different estimations for the differences in mean in Example \ref{ex1}; values rounded to the third decimal, $n=1000$.}
\label{tab:ex1E}
\end{table}%

\begin{table}[ht]
\begin{center}
\begin{tabular}{|c|c||c|c|c|}
\hline
 & $\V[S(1,j)] -\V[S(2,j)] $ & $\Sigma^{(n)}(1,j) - \Sigma^{(n)}(2,j)$&  $\Sigma^{(n)}_{Q, U}(1,j) - \Sigma^{(n)}_{Q, U}(2,j)$  & $\widehat{\Sigma}^{(n)}_{Q, U}(1,j) - \widehat{\Sigma}^{(n)}_{Q, U}(2,j)$ \cr
\hline 
$j=1$ & $1$ & $0.99$ &  $1.08$  & $1.05$ \cr
\hline
$j=2$ & $0$ & $ 0.68$  & $0.13$ & $0.21$  \cr
\hline
\end{tabular}
\end{center}
\caption{Different estimations for the differences in variance in Example \ref{ex1}; values rounded to the third decimal, $n=1000$.}
\label{tab:ex1V}
\end{table}

\end{ex}

\begin{ex}[Wafer production]
Our study was motivated by a root-cause analysis in the wafer fabrication. Wafer fabrication is, in general, a procedure of many repeated sequential processes.  For instance, a simplified illustration consists of $12$ subsequent fabrication steps, see \cite{wiki:wafer}, where intermediate measurement of qualities is not feasible. Our concrete example treated up to $30$ different steps and more than $90$ machines. Unfortunately,  since our industrial partner insists on fulfilling an NDA, we cannot publish any more information about the project. Probably for the same reasons, we could not find publicly available data on other industrial projects. 
\end{ex}

In the following, we treat some textbook examples to illustrate the possible application of our method.

\begin{ex}[Tooth growth]
We consider the classical tooth growth data set, \cite{Cr:47} or \cite{R:ToothGrowth}, that studies the effect of vitamin C on tooth growth in Guinea pigs. We also refer to \cite{Gr:20} for a detailed analysis of this data set. 
  The response is the length of odontoblasts (cells responsible for tooth growth) in $60$ guinea pigs. Each animal received one of three dose levels of vitamin C ($0.5$, $1$, and $2$ mg/day) by one of two sources of vitamin C; ascorbic acid (VC) or orange juice (OJ). This experiment is balanced and corresponds to the uniform case, under the additional assumption that every path appears $10$ times.  See Figure \ref{fig:tooths1} for violin plots of the six different groups. 

\begin{figure}
\begin{center}
\includegraphics[scale=0.5]{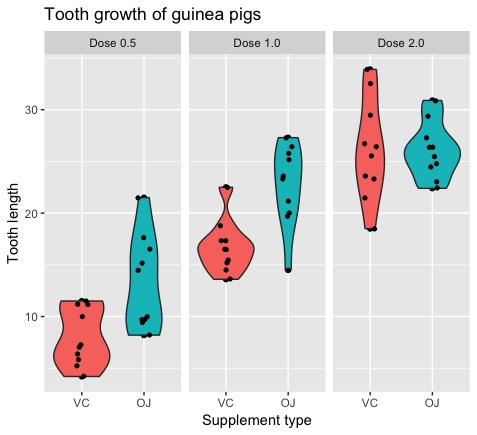}
\end{center}
\caption{Violin plot of odontoplast growth (in microns). }
\label{fig:tooths1}
\end{figure}
The figure suggests that the mean tooth growth increases with the dosage level and that OJ seems to lead to higher growth rates than VC except at a 2 mg/day dosage. The variability around the means looks to be different, however relatively small to the differences among the means. 
There might be some skew in the responses in some of the groups (e.g.,  the group OJ with dose $0.5$ has a right skew)  that may violate the standard normality assumption. 
However, a classic analysis does not reveal  statistical evidence against constance variances and  the normality assumption. The fitted coefficients in a multivariate linear model then give an increase of $3.7$ from VC to OJ and that the lengths increase with a dose of $1$mg/day (resp.\ $2$mg/day) by $9.13$ (resp. $15.49$) to the baseline of $0.5$ mg/day. These differences are reported with $p$-values smaller than $0.001$.
Without having to verify the above conditions, our method obtains the same value for the difference in mean contribution. Also, we can estimate the differences of the variances: the variance increases from VC to OJ by $23.87$. Passing from  $0.5$mg/day to $1$mg/day (resp.\ $2$mg/day) increases the variance by $0.71$ (resp.\ $5.70$).  As expected, these values coincide with the pairwise difference of the empirical variances in each group.

\end{ex}


The next examples treats a situation where variances are not constant.
\begin{ex}[Biomass response]
We consider the experiment in  \cite{Gu:13} that studied the impacts of Nitrogen (N) additions on the mass of two feather moss species (Pleurozium schreberi (PS) and Hylocomium (HS)) in an experimental forest in Sweden. More details on the classic analysis of this data set can be found in \cite{Gr:20}.  The study used a randomized block design. Here, pre-specified areas were divided into three experimental units of area  $0.1$ hectare, and one of the three treatments was randomly applied.  This procedure resulted in a balanced design with six replicates at each combination of species and treatment. 
The three treatments involved different levels of Nitrogen applied immediately after snowmelt: no additional Nitrogen (control), $12.5$ kg N per ha and year,  (N12.5), and $50$ kg N per ha and year (N50). The study's primary objective was whether the treatments would have differential impacts on the two species; while they measured additional variables, we will only observe the variables mentioned above. 
The violin plot in Figure \ref{fig:moss1} provides a first overview of the responses. 
\begin{figure}
\begin{center}
\includegraphics[scale=0.5]{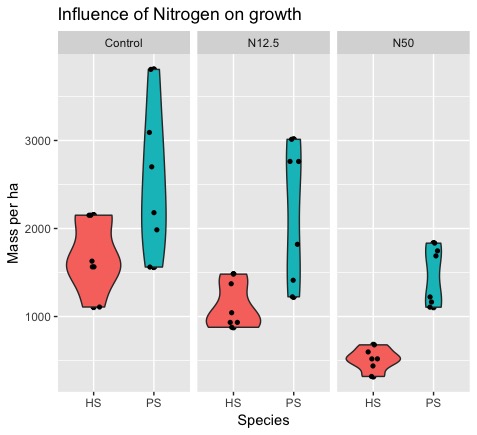}
\end{center}
\caption{Violin plot of moss growth. }
\label{fig:moss1}
\end{figure}
Figure \ref{fig:moss1} indicates some differences in variability in the different groups. Classical residual versus fitted plots reveal that there is a problem with non-constant variances. The normality assumption seems, however, to be verified. A standard remedy for non-constant variance is to transform the data to a logarithmic scale. This transformation reduces the differences in variances but induces a slight violation of the normality assumption. Our method does not demand such a transformation, and we can directly perform the pairwise comparison.  However, since the experiment is balanced, we obtain the same estimates as in the multivariate regression model: the treatment with  N12.5  (resp.\ N50) decreases the growth by    $-488.55$ (resp.\  $-1138.76$), and the species PS has in mean more $955.78$ mass grow per ha and year than the species HS.  Our method allows also quantifying the difference in variability.  The increase in variance from HS to PS is estimated with $287118.5$, and the loss of variability in N12.5 (resp.\ N50) is $29235.43$ (resp.\ $252389.76$) compared to the control group. \end{ex}

We now turn to an examples with non-constant variability and dependency between the predictors.
\begin{ex}[California test score data]
We consider the California test score data, see \cite{R:CASchools}. This data concerns  all $420$ $K-6$ and $K-8$ districts in California between $1998$ and $1999$. Test scores are on the Stanford 9 standardized test administered to 5th-grade students.  We add a new variable, called score, that is the mean of the English and math results.  This score will be our response variable.

The original data set contains many possible school characteristics or predictors for the score, as enrollment, number of teachers, number of students, number of computers per classroom, and expenditures per student.  Besides, there are demographic variables for the students that are averaged across each district. These demographic variables include the percentage of English learners, that is, students for whom English is a second language. We call this variable ``english''.
 
This dataset is an interesting textbook example since it illustrates many different aspects of multivariate regression models. In the following, we do not aim to conduct a scientific study of this data set, but only want to illustrate our method. 
A natural guess is that the student-teacher ratio may impact the pupils' test scores. For this purpose, we add a variable STR that gives the ratio between students and teachers to the dataset. The correlation between STR and scores is about $-0.22$ and can be considered statistically significant, e.g.,~the Pearson's correlation test leads to a $p$-value of less than $0.1\%$.
However, the conclusion that smaller class sizes lead to better performances might be made too fast, or their impact may be overestimated. Other variables may influence the test scores, too. The next variable that we are looking at is english; its correlation with scores is $-0.64$, which can be considered statistically significant. Moreover, STR and english correlate $0.19$. A first visual glimpse of the data can be obtained through Figure \ref{fig:CASchools1}.\begin{figure}
\begin{center}
\includegraphics[scale=0.5]{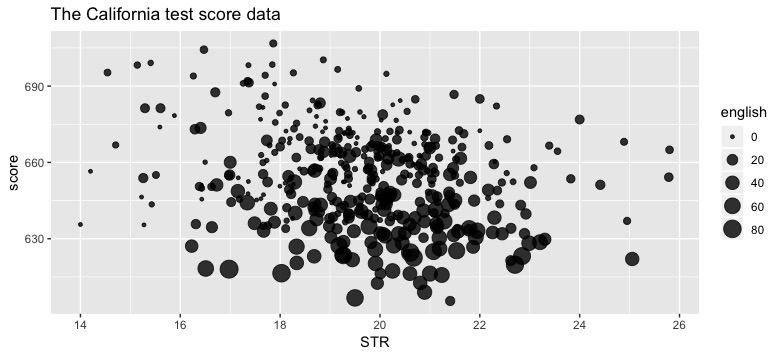}
\end{center}
\caption{Bubble plot of the score as a function of STR. The sizes of the bubbles correspond to the percentage of students for whom English is a second language.}
\label{fig:CASchools1}
\end{figure}

Our method is a priori suited to categorical predictors. We, therefore, discretize our data and create $5$ groups of equal size for STR and english. 
The groups in STR are defined by the following breaks (rounded to two decimals): $14.00$,  $18.16$, $19.27$,  $20.08$,  $21.08,$ and $25.80$. The new variable is called STRCat.
For instance, group 1 contains all observations with STR between $14.00$ and  $18.16$. The  five groups of english are defined by the  breaks:  $0.00$,  $1.16$,  $5.01$,  $13.14$, $30.72$, and $85.54$. The new variable is called englishCat.
Figure \ref{fig:CASchools2} shows violin plots of discretized data and gives a strong indication of non-constant variances.\begin{figure}
\begin{center}
\includegraphics[scale=0.5]{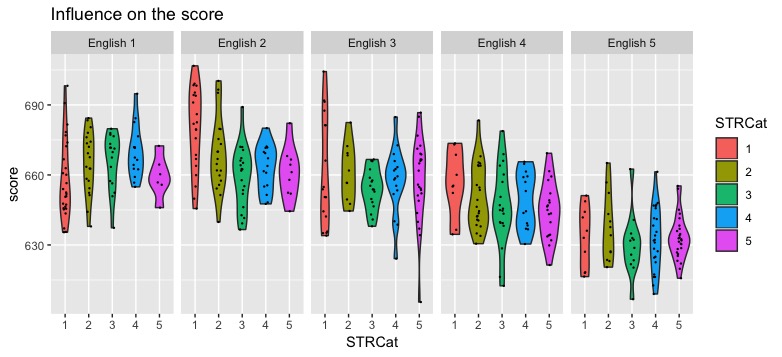}
\end{center}
\caption{Influence of student-teacher ratio and percentage of the non-native speaker on the score.}
\label{fig:CASchools2}
\end{figure}
Our Markov model can now be applied to the adapted data. Since we only have two predictors, the assumption of Markov dependencies is satisfied. We consider the variable english as the first column and the variable STRcat as the second. 

The transition of our Markov chain describes the correlation between these two categorical. Since we choose every group in englischCat of equal size $\widehat Q^{(1)}$ is the uniform distribution of $\{1,\ldots, \}$. The second Markov transition kernel is estimated by:\begin{equation}
\widehat Q^{(2)}= \left(
\begin{matrix}
  0.32 & 0.24 & 0.19 & 0.18 & 0.07 \\ 
  0.25 & 0.23 & 0.24 & 0.18 & 0.11 \\ 
  0.20 & 0.12 & 0.20 & 0.19 & 0.29 \\ 
  0.11 & 0.27 & 0.23 & 0.17 & 0.23 \\ 
  0.12 & 0.14 & 0.14 & 0.29 & 0.31 \\ 
  \end{matrix}\right).
\end{equation}
We compare the different estimators for the differences in means. Let $\Delta(i,j)=\E[S(1,j)]-\E[S(i,j)]$ for $i\in\{2, 5\}$ and $j\in\{1,\ldots\}$. We denote $\Delta^{(n)}_{U}(i,j)=T^{(n)}_{U}(1,j)-T^{(n)}_{U}(i,j)$ for the estimation under the assumption of uniform transitions and obtain
\begin{equation}
\Delta^{(n)}_{U} = \left(
\begin{matrix}
  3.91 & -3.37 \\ 
  -5.45 & -8.66 \\ 
  -14.71 & -8.81 \\ 
  -30.29 & -13.41 \\ 
  \end{matrix}\right).
\end{equation}
The unbiased estimations are given by $\widehat \Delta^{(n)}_{Q,U}(i,j)=\widehat{T}_{Q,U}^{(n)}(1,j)-\widehat{T}_{Q,U}(i,j)$:
\begin{equation}
\widehat  \Delta^{(n)}_{Q,U} = \left(
\begin{matrix}
2.68 & -1.11 \\ 
  -5.41 & -6.69 \\ 
  -14.14 & -3.34 \\ 
  -30.32 & -6.64 \\
  \end{matrix}\right).
\end{equation}
To compare the model with a multivariate regression we give the corresponding estimates  resulting from the standard linear regression:
\begin{equation}
\Delta^{(n)}_{\mathrm{regression}} = \left(
\begin{matrix}
4.39 & -0.66 \\ 
  -4.29 & -6.29 \\ 
  -13.68 & -2.88 \\ 
  -29.09 & -5.26 \\
  \end{matrix}\right).
\end{equation}
We can observe that the biased estimator $T_{U}^{(n)}$ leads to an overestimation of the effect of STR on the scores. 
Finally, let us consider the estimates for the differences in variances. For $i\in\{2, 5\}$ and $j\in\{1,\ldots\}$ let $\Gamma(i,j)=\V[S(1,j)]-\V[S(i,j)]$ and its estimators
$\Gamma^{(n)}_{U}(i,j)=\Sigma^{(n)}_{U}(1,j)-\Sigma^{(n)}_{U}(i,j)$  and $\widehat \Gamma^{(n)}_{Q,U}(i,j)=\widehat{\Sigma}_{Q,U}^{(n)}(1,j)-\widehat{\Sigma}_{Q,U}(i,j)$. We obtain
\begin{equation}
\Gamma^{(n)}_{U} = \left(
\begin{matrix}
  -64.24 & 185.54 \\ 
  -72.11 & 198.51 \\ 
  -0.56 & 149.28 \\ 
  47.12 & 198.30 \\ 
  \end{matrix}\right) \mbox{ and }
  \widehat  \Gamma^{(n)}_{Q,U} = \left(
\begin{matrix}
-68.59 & 190.41 \\ 
  -84.87 & 190.47 \\ 
  -29.32 & 196.33 \\ 
  9.40 & 244.44 \\
  \end{matrix}\right).
\end{equation} For instance, this shows that the variances in the group 1 and 5 of  english are similar and are smaller than in the other groups. The variance in group of STRcat is much higher than in the other four groups. We also see that the unbiasing leads to very different estimates on the variability.
\end{ex}

\bibliographystyle{plain}
\bibliography{bib_rca}
------------------------------------------------------------------

\appendix


\section{A multidimensional version of {Anscombe's} theorem}\label{sec:appendix}
We give a multi-dimensional version of the classical Anscombe's Theorem. The proof follows with simple modification the argument given by Renyi in his proof of Anscombe's theorem \cite{Gu:09}; it is presented here for the sake of completeness.

	\begin{thm}[Multidimensional Anscombe]\label{thm:MultiDimensionalAnscombe}
		Let $Y^{(i)}:= (Y^{(i)}_{1},Y^{(i)}_{2},\ldots,Y^{(i)}_{m})$, for $i \geq  1$, be a sequence of  i.i.d.\ real-valued random vectors with  $\E[Y^{(i)}]=0\in \R^{m}$ and covariance matrix $\Sigma$. Let $N(t)$ be a random integer-valued random variable such that $N(t)/t \xrightarrow[t\rightarrow \infty]{a.s.} \theta \in \R^+,$ then
		$$
		\frac{1}{\sqrt{N(t)}}
		\sum_{i=1}^{N(t)} Y^{(i)}
		\xrightarrow[t\rightarrow \infty]{distr.}
		\mathcal{N}(0, \Sigma ).
		$$
	\end{thm}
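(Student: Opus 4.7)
The plan is to reduce the multidimensional statement to the classical one-dimensional Anscombe's theorem via the Cramér-Wold device. Fix an arbitrary vector $\lambda \in \R^{m}$ and define the scalar random variables $Z^{(i)}_{\lambda} := \langle \lambda, Y^{(i)} \rangle$. Since $(Y^{(i)})_{i \geq 1}$ is i.i.d.\ with $\E[Y^{(1)}] = 0 \in \R^{m}$ and covariance matrix $\Sigma$, the sequence $(Z^{(i)}_{\lambda})_{i \geq 1}$ is also i.i.d.\ with mean zero and finite variance $\lambda^{T} \Sigma \lambda$ (finite because each entry of $\Sigma$ is finite, so every finite linear combination of coordinates of $Y^{(1)}$ is square-integrable).

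First I would apply the classical scalar Anscombe theorem to $(Z^{(i)}_{\lambda})_{i \geq 1}$ with the same random index $N(t)$; this is legitimate because $N(t)/t \to \theta > 0$ almost surely is exactly the hypothesis of the one-dimensional version and does not involve $\lambda$. This yields, for each fixed $\lambda$,
$$
\frac{1}{\sqrt{N(t)}} \sum_{i=1}^{N(t)} Z^{(i)}_{\lambda} \xrightarrow[t \to \infty]{\text{distr.}} \normal(0, \lambda^{T} \Sigma \lambda).
$$

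Second, I would exploit the linearity of the inner product to rewrite
$$
\frac{1}{\sqrt{N(t)}} \sum_{i=1}^{N(t)} Z^{(i)}_{\lambda} = \left\langle \lambda,\; \frac{1}{\sqrt{N(t)}} \sum_{i=1}^{N(t)} Y^{(i)} \right\rangle.
$$
Thus every linear projection of the random vector $S^{*}_{t} := \tfrac{1}{\sqrt{N(t)}} \sum_{i=1}^{N(t)} Y^{(i)}$ converges in distribution to the corresponding linear projection of a $\normal(0, \Sigma)$ random vector. Invoking the Cramér-Wold theorem then delivers $S^{*}_{t} \xrightarrow[t \to \infty]{\text{distr.}} \normal(0, \Sigma)$, which is the claimed conclusion.

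The main subtlety is ensuring that the scalar Anscombe step is applicable uniformly in $\lambda$; this is automatic because the random index $N(t)$ is a single random variable independent of the direction, so no uniformity issue arises. An alternative route, closer to the remark in the statement of the theorem, would be to follow Rényi's original argument directly: set $n(t) := \lfloor \theta t \rfloor$, apply the multivariate central limit theorem to obtain $S_{n(t)}/\sqrt{n(t)} \to \normal(0, \Sigma)$, and then show coordinate by coordinate that $\|S_{N(t)} - S_{n(t)}\|/\sqrt{n(t)} \to 0$ in probability via a Kolmogorov-type maximal inequality applied on the random index window $|k - n(t)| < \delta\, n(t)$, with Slutsky's theorem completing the argument.
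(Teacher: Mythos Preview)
Your proof is correct, but it proceeds along a genuinely different route from the paper. You reduce to the scalar Anscombe theorem via the Cram\'er--Wold device: for each fixed $\lambda\in\R^{m}$ you apply the one-dimensional result to $Z^{(i)}_{\lambda}=\langle\lambda,Y^{(i)}\rangle$ and then invoke Cram\'er--Wold to lift the conclusion to the vector. The paper instead rewrites R\'enyi's argument directly in $\R^{m}$ (precisely the ``alternative route'' you sketch at the end): it sets $n(t)=\lfloor\theta t\rfloor$, applies the multivariate CLT to $S_{n(t)}/\sqrt{n(t)}$, and then controls each coordinate of $(S_{N(t)}-S_{n(t)})/\sqrt{n(t)}$ via Kolmogorov's maximal inequality on the window $[n(t)(1-\epsilon^{3}),\,n(t)(1+\epsilon^{3})]$, finishing with a union bound over the $m$ coordinates and the multivariate Slutsky theorem. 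Your approach is shorter and more conceptual, treating the scalar theorem as a black box; the paper's approach is fully self-contained and makes the dependence on Kolmogorov's inequality explicit, at the cost of reproducing the scalar machinery. Both are standard and valid; your remark that no uniformity in $\lambda$ is needed (Cram\'er--Wold only requires convergence for each fixed $\lambda$) is exactly the point that makes the reduction clean.
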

	\begin{proof}[\textbf{Proof}]
		Let $n (t) := \lfloor \theta t \rfloor $ and let $S_{k} := \sum_{i=1}^{k} Y^{(i)}$ and let $S^{(j)}_{k} = \sum_{i=1}^{k} Y^{(i)}_j $ then 
		\begin{equation}\label{eq:SnSplit}
		\frac{{S}_{N(t)}}{\sqrt{N(t)}}
		=
		\left(
		\left(
		\frac{S^{(1)}_{n (t)}}{ \sqrt{ n (t) }} 
		+ 
		\frac{S^{(1)}_{N (t)}- S^{(1)}_{n (t)}}{ \sqrt{ n (t) }}
		\right)
		\sqrt{\frac{n (t)}{N(t)}}
		, \ldots,
		\left(
		\frac{S^{(m)}_{n (t)}}{\sqrt{ n (t) }}
		+
		\frac{S^{(m)}_{N (t)}- S^{(m)}_{n (t)}}{ \sqrt{ n (t) }}
		\right)
		\sqrt{\frac{n (t)}{N(t)}}
		\right).
		\end{equation}
		The first observation is that, since  $n(t)$ is deterministic, due to the multi-dimensional central limit theorem we have that
		\begin{equation}\label{eq:AnscombeCLTpartNormalAscombe}
		\left(
		\frac{S^{(1)}_{ {n (t)} }}{ \sqrt{ {n (t)} }} 
		,\ldots,
		\frac{S^{(m)}_{{n (t)}}}{ \sqrt{ {n (t)} }}
		\right)
		\xrightarrow[\mbox{\tiny $t\rightarrow \infty$}]{\mbox{\tiny distr.}}
		\mathcal{N}({0}, \Sigma),
		\end{equation}
		where $\Sigma$ is the covariance matrix of the random vector $Y^{(1)}.$ 
		Next, let $\epsilon \in (0,1/3)$ be given and $n_1 (t) := \lfloor n (t) (1-\epsilon^3) \rfloor + 1$ and $n_2 (t) := \lfloor n (t) (1 + \epsilon^3) \rfloor $, then
		\begin{equation}\label{eq:UnionBoundSN-SnNormalAscombe}
		\P \left[ \bigcup_{i=1}^m \left\{ \big|S^{(i)}_{N (t)}- S^{(i)}_{n (t)} \big| > \epsilon \sqrt{n} \right\} \right]
		\leq
		\sum_{i=1}^{m}\P \left[ \big|S^{(i)}_{N (t)}- S^{(i)}_{n (t)} \big| > \epsilon \sqrt{n} \right],
		\end{equation}
		by the union bound. Let $ \sigma^2_i := \E \left[\left(Y^{(1)}_i \right)^2\right] < \infty$, then we also know that
		\begin{align*}
		\P \left[ \big|S^{(i)}_{N (t)}- S^{(i)}_{n (t)} \big| > \epsilon \sqrt{n(t)} \right]
		&=
		\P \left[ \big|S^{(i)}_{N (t)}- S^{(i)}_{n (t)} \big| > \epsilon \sqrt{n(t)} , \,  N (t) \in \left[n_1(t),n_2(t)\right] \right]
		\\&\quad +
		\P \left[ \big|S^{(i)}_{N (t)}- S^{(i)}_{n (t)} \big| > \epsilon \sqrt{n(t)} , \,  N (t) \notin \left[n_1(t),n_2(t)\right] \right]
		\\& \leq
		\P\left[ \max_{n_1(t) \leq n \leq n(t)} \big|S^{(i)}_{n}- S^{(i)}_{n (t)} \big| > \epsilon \sqrt{n(t)}  \right]
		\\&\quad +
		\P\left[ \max_{n(t) \leq n \leq n_2(t)} \big|S^{(i)}_{n}- S^{(i)}_{n (t)} \big| > \epsilon \sqrt{n(t)}  \right]
		\\&\quad +
		\P\left[ N (t) \notin \left[n_1(t),n_2(t)\right] \right]
		\\& \leq
		\frac{(n(t) - n_1(t)) \sigma^2_i }{\epsilon^2 n (t)}
		+ \frac{(n_2(t) - n(t))\sigma^2_i}{\epsilon^2 n (t)}
		\tag*{ (Kolmogorov's inequality)}
		\\&\quad+ 
		\P\left[ N (t) \notin \left[n_1(t),n_2(t)\right] \right]
		\\& \leq
		3\epsilon
		\end{align*}
		for all $i = 1,\ldots , m$ where the last inequality is valid for $t$ sufficiently large. Plugging this last estimation in  Inequality~\eqref{eq:UnionBoundSN-SnNormalAscombe} yields for $t$ sufficiently large
		\begin{align*}
		\P \left[ \bigcup_{i=i}^m \left\{ \big|S^{(i)}_{N (t)}- S^{(i)}_{n (t)} \big| > \epsilon \sqrt{n} \right\} \right]
		\leq
		3m\epsilon,
		\end{align*}
		for any $\epsilon \in (0,1/3)$. Since $\epsilon$ can be chosen arbitrarily small we deduce that
		$$
		\left(
		\frac{S^{(1)}_{N (t)}- S^{(1)}_{n (t)}}{ \sqrt{ n (t) }}
		,\ldots,
		\frac{S^{(m)}_{N (t)}- S^{(m)}_{n (t)}}{ \sqrt{ n (t) }}
		\right)
		\xrightarrow[\mbox{\tiny $t\rightarrow \infty$}]{\mbox{\tiny prob.}} (0,0,\ldots,0).
		$$
		
	By noticing  that $ \sqrt{\frac{n (t)}{N (t)}} \xrightarrow[\mbox{\tiny $t\rightarrow \infty$}]{\mbox{\tiny prob.}} 1$ and using the multidimensional version of Slutsky's theorem (Lemma 2.8 in \cite{van2000asymptotic}), we deduce that
		\begin{equation*}
		\sqrt{\frac{n (t)}{N (t)}}
		\left(
		\frac{S^{(1)}_{N (t)}- S^{(1)}_{n (t)}}{ \sqrt{ n (t) }}
		,\ldots,
		\frac{S^{(m)}_{N (t)}- S^{(m)}_{n (t)}}{\sqrt{ n (t) }}
		\right)
		\xrightarrow[\mbox{\tiny $t\rightarrow \infty$}]{\mbox{\tiny prob.}}{} 
		(0,0,\ldots,0),
		\end{equation*}
		where the last convergence is indeed in probability since it is a convergence in distribution to a constant.
		Using this last equation, Equation~\eqref{eq:SnSplit}, and the multidimensional Slutsky theorem, we conclude that
		$$
		\frac{{S}_{N(t)}}{\sqrt{N(t)}}
		\xrightarrow[\mbox{\tiny $t\rightarrow \infty$}]{\mbox{\tiny distr.}}
		\mathcal{N}({0}, \Sigma).
		$$
	\end{proof}

\section{An Anscombe version of the multivariate delta method}
We present a modification of the multivariate delta method for the case when $n$ is replaced by a random variable. The proof is a simple modification of the proof of the standard delta method. We give it for the sake of completeness.

	\begin{thm}[Anscombe's multivariate delta method]\label{thm:AscombesDeltaMethod}
		Let $\theta \in \R^k$ and $\left\{ T_n \right\}_{n \in \N}$ be a sequence of $k$ dimensional random vectors and $\left\{ \mathcal{X}_n \right\}_{n\in\N}$ be sequence of natural valued random variables such that
		\begin{equation*}
			 \begin{split}
				 &\bullet  \sqrt{\mathcal{X}_n}(T_{\mathcal{X}_n} - \theta) \xrightarrow[n\rightarrow \infty]{\mbox{\tiny distr.}} \normal_k(0,\Sigma).\\
				 &\bullet T_{\mathcal{X}_n} \xrightarrow[\tiny n\rightarrow \infty]{\mbox{\tiny prob.}} \theta
			 \end{split}
		\end{equation*}
		Furthermore, let $h:\R^k \rightarrow \R^m$ be once differentiable at $\theta$ with the gradient matrix $\nabla h(\theta)$. Then
		$$
			\sqrt{\mathcal{X}_n} \left( h(T_{\mathcal{X}_n}) - h(\theta) \right) \xrightarrow[n\rightarrow \infty]{\mbox{\tiny distr.}} \normal_k( 0,\nabla h(\theta)^T\Sigma\nabla h(\theta) ).
		$$
	\end{thm}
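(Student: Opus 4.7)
The plan is to mirror the classical proof of the multivariate delta method, the only wrinkle being that the deterministic index $n$ is replaced by the random index $\mathcal{X}_n$. The two inputs I will rely on are the first-order expansion of $h$ at $\theta$ and the multidimensional Slutsky theorem cited in the excerpt (Lemma 2.8 in \cite{van2000asymptotic}).

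First I would use the differentiability of $h$ at $\theta$ to write
\[
h(x) - h(\theta) = \nabla h(\theta)^T (x-\theta) + \|x-\theta\|\,\rho(x),
\]
where $\rho$ is defined for $x\neq\theta$ by $\rho(x) := (h(x)-h(\theta)-\nabla h(\theta)^T(x-\theta))/\|x-\theta\|$ and extended continuously by $\rho(\theta):=0$. Setting $x=T_{\mathcal{X}_n}$ and multiplying by $\sqrt{\mathcal{X}_n}$ yields the decomposition
\[
\sqrt{\mathcal{X}_n}\bigl(h(T_{\mathcal{X}_n})-h(\theta)\bigr)
= \nabla h(\theta)^T\sqrt{\mathcal{X}_n}(T_{\mathcal{X}_n}-\theta)
+ \bigl(\sqrt{\mathcal{X}_n}\|T_{\mathcal{X}_n}-\theta\|\bigr)\,\rho(T_{\mathcal{X}_n}).
\]

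For the linear term I would apply the continuous mapping theorem to hypothesis (i) through the continuous linear map $v\mapsto \nabla h(\theta)^T v$, which gives convergence in distribution to $\mathcal{N}_m(0,\nabla h(\theta)^T\Sigma\nabla h(\theta))$. For the remainder, hypothesis (i) with the continuous map $v\mapsto\|v\|$ shows that $\sqrt{\mathcal{X}_n}\|T_{\mathcal{X}_n}-\theta\|$ converges in distribution (hence is tight), while hypothesis (ii) combined with continuity of $\rho$ at $\theta$ and the continuous mapping theorem yields $\rho(T_{\mathcal{X}_n})\to 0$ in probability. A tight sequence multiplied by an $o_P(1)$ sequence is $o_P(1)$, so the whole remainder vanishes in probability, and one final application of the multidimensional Slutsky theorem delivers the stated limit.

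The only genuine technical point, and the main (minor) obstacle, is the careful handling of the residual ratio $\rho$, which is a priori undefined at $\theta$ and where $T_{\mathcal{X}_n}$ may equal $\theta$ with positive probability. Extending $\rho$ by $0$ at $\theta$, so that continuity holds there by virtue of the differentiability of $h$, removes this issue cleanly and legitimizes the application of the continuous mapping theorem under the convergence $T_{\mathcal{X}_n}\to\theta$ in probability of hypothesis (ii). Everything else is bookkeeping, and notably no Anscombe-type uniformity argument is needed here because the randomization is already absorbed into hypothesis (i).
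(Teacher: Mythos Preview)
Your proposal is correct and follows essentially the same route as the paper: first-order expansion of $h$ at $\theta$, decomposition into a linear term handled by continuous mapping and a remainder handled by Slutsky, with no Anscombe-specific uniformity needed. If anything, you are slightly more careful than the paper in extending the residual $\rho$ by $0$ at $\theta$ and in phrasing the remainder argument as ``tight times $o_P(1)$'', whereas the paper invokes Slutsky directly; the substance is identical.
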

	
	\begin{proof}[\textbf{Proof}]
		By the definition of differentiability of a vector field, we have that 
		\begin{equation}
			h(x)
			=
			h(\theta)
			+
			(x - \theta) \cdot \nabla h (\theta)
			+\lvert x - \theta \rvert R_2 (x)			
		\end{equation}
		where $\lvert R_2(x) \rvert \xrightarrow[x \rightarrow \theta]{} 0$. In particular, we have that
		\begin{equation}\label{eq:TaylorSeries}
		\sqrt{\mathcal{X}_n}\cdot(h(T_{\mathcal{X}_n}) - h(\theta))
		=
		\sqrt{\mathcal{X}_n} \cdot (T_{\mathcal{X}_n} - \theta) \cdot \nabla h(\theta)
		+
		\left(\sqrt{\mathcal{X}_n} \cdot \lvert T_{\mathcal{X}_n} - \theta  \rvert\right) R_2(T_{\mathcal{X}_n}).
		\end{equation}
		On the other hand, it follows from the assumptions and the definition of $R_2$ that
	\begin{equation*}
		\begin{split}
			&\bullet \sqrt{\mathcal{X}_n} \cdot ( T_{\mathcal{X}_n} - \theta )
			=
			( \sqrt{\mathcal{X}_n} \left(T_{\mathcal{X}_n} - \theta \right) )  
			\xrightarrow[n\rightarrow \infty]{\footnotesize distr.} \normal_k(0,\Sigma), \\
			&\bullet R_2(T_{\mathcal{X}_n})\xrightarrow[n\rightarrow \infty]{\footnotesize prob.} 0.
		\end{split}
	\end{equation*}
Therefore, using the multidimensional Slutsky's theorem (Lemma 2.8 in \cite{van2000asymptotic}), we get that
		\begin{equation}\label{eq:VanishingResidue}
			\left(\sqrt{\mathcal{X}_n} \cdot \lvert T_{\mathcal{X}_n} - \theta  \rvert\right) R_2(T_{\mathcal{X}_n})
			\xrightarrow[n\rightarrow \infty]{\footnotesize {prob.}} 0,		
		\end{equation}
		where the last convergence is in probability because it is towards a constant. Using once more the multidimensional Slutsky's theorem together with Equations~\eqref{eq:TaylorSeries},\eqref{eq:VanishingResidue} we conclude that
		\begin{equation}
		\sqrt{\mathcal{X}_n}\cdot(h(T_{\mathcal{X}_n}) - h(\theta))
		\xrightarrow[n\rightarrow \infty]{\mbox{\tiny distr.}}
		\normal_k( 0,\nabla h(\theta)^T\Sigma\nabla h(\theta) ).
		\end{equation}
\end{proof}

\subsection*{Acknowledgment}
The authors wish to thank Alessandro  Chiancone, Herwig Friedl,  J\'er\^{o}me Depauw, and Marc Peign\'e for stimulating discussing during this project.   A.~G.~acknowledges  financial  support  from  the  Austrian  Science  Fund  project  FWF  P29355- N35.  Grateful acknowledgement is made for hospitality from TU-Graz  where the research was carried out during visits of S.~M.

\end{document}